\theoremstyle{definition}
\newtheorem{definition}{Definition}
\newtheorem{theorem}{Theorem}
\newtheorem{proposition}{Proposition}
\newtheorem{lemma}{Lemma}
\theoremstyle{remark}
\newtheorem{remark}{Remark}
\def \reel {\mathbb{R}}
\def \cG {\mathcal{G}}
\def \1{{\mathbf{1}}}
\def \1{{\mathbf{1}}}
\begin{document}

\title{An interaction index for multichoice games}
\author{Mustapha Ridaoui${}^1$, Michel Grabisch${}^1$, Christophe
  Labreuche${}^2$\\
\normalsize $^{1}$ Paris School of Economics, Universit\'e Paris I - Panth\'eon-Sorbonne, Paris, France  \\
{\normalsize\tt \{mustapha.ridaoui,michel.grabisch\}@univ-paris1.fr}\\
\normalsize $^{2}$ Thales Research \& Technology, Palaiseau, France\\
{\normalsize\tt christophe.labreuche@thalesgroup.com}
}

\date{Version of \today}
%\date{}
\maketitle

\begin{abstract}
Models in Multicriteria Decision Analysis (MCDA) can be analyzed by means of an
importance index and an interaction index for every group of criteria. We
consider first discrete models in MCDA, without further restriction, which
amounts to considering multichoice games, that is, cooperative games with
several levels of participation. We propose and axiomatize an interaction index
for multichoice games. In a second part, we consider the continuous case,
supposing that the continuous model is obtained from a discrete one by means of
the Choquet integral.
\end{abstract}
{\bf Keywords:} multicriteria decision analysis, interaction, multichoice game,
Choquet integral
\section{Introduction}
\label{Sintro}
An important issue in MultiCriteria Decision Analysis (MCDA) is to be able to
analyse and explain a numerical model, obtained by elicitation of preferences of
the decision maker. A classical way to do this is to assess the importance of
each criterion (see a general approach to define an importance index in
\citep{rigrla17a}). This description of the model may appear to be sufficient
in the case of simple models, which are additive in essence (e.g., additive utility
models), as it is well known that they imply mutual preferential independence of
criteria \citep{kera76}. However, in case of more complex models, the
preferential independence among criteria does not hold any more, and {\it
  interaction} appears among criteria, so that a description of the model by the
sole importance indices is not sufficient any more. For example, for models where aggregation
of preference is done through a Choquet integral w.r.t. a capacity, an
interaction index is defined for any group of criteria \citep{grla10}, which is a generalization of the interaction index
for pairs of criteria proposed by \cite{muso93}.  Roughly
speaking, a positive interaction index induces a conjunctive behavior (like the
minimum operator), while a negative interaction index induces a disjunctive
behavior (maximum). 

The aim of the paper is to propose an axiomatic foundation of an interaction index
for a MCDA model with no special restriction (and in particular, mutual preferential
independence is not supposed to hold). In a first step, the attributes are
supposed to be defined on a finite universe. Then, such a model is equivalent to
what is called a multichoice game in game theory \citep{hsra93}, that is, a game on
a set of players $N$, where each player can play at a level of participation
represented by an integer between 0 and $k$. Up to our
knowledge, there is no definition of an interaction index for multichoice
games. Nevertheless, there exists a general form of interaction index for games on
lattices \citep{grla07b}, and multichoice games with $k$ levels can be considered as
games on the lattice $(k+1)^N$. This interaction index is defined, however, for
any element of the lattice  $x\in (k+1)^N$, i.e., any profile of participation
of the players. This does not make sense for our purpose, since we are looking
for an interaction index defined for {\it groups} of players/criteria. It is the
contribution of this paper to provide such an index, and to give a
characterization of it.

The paper is organized as follows. Section~\ref{Spre} introduces the necessary
material and notation. Section~\ref{SpreW} summarizes previous works on the
interaction index (the case of classical games and the case of games on
lattices). Our work on the importance index for multichoice games is summarized
in Section~\ref{sec:imin}, since some of the axioms are necessary for our
approach. Section~\ref{Saxiom} gives the main result of the paper, which is the
definition and characterization of an interaction index for multichoice games,
and consequently for general discrete MCDA models. In Section~\ref{sec:Cho}, we
address the continuous case, supposing that the model is obtained from a
discrete one via the Choquet integral.

\section{Preliminaries}
\label{Spre}

Throughout the paper, the cardinality of sets will be denoted by corresponding
lower case letters, i.e., $|N|:=n$, $|S|:=s$, etc. For notational convenience,
we will omit braces for singletons, i.e., $S\cup\{i\}$ is written $S\cup i$, etc.

\medskip

Let $N=\lbrace 1, \ldots, n\rbrace$ be a fixed and finite set which can be
thought as the set of attributes or criteria (in MCDA), players (in cooperative game
theory), etc., depending on the domain of application. In this paper, we will mainly focus on MCDA applications. 

\medskip

We suppose that each attribute $i \in N$ takes values in a set $L_i$, which is
supposed to be finite and denoted by $L_i = \{0, 1,\ldots, k_i\}$. The
alternatives are represented as elements of the Cartesian product $L:=L_1\times
\ldots \times L_n$. An alternative is thus written as a vector $x=(x_1,\ldots,
x_n)$ where $x_i\in L_i$ for all $i\in N$. For each $i\in N$, we denote by
$L_{-i}$ the set $\times_{j\neq i}L_j$. For each $y_{-i} \in L_{-i}$, and any
$\ell \in L_i$, $(y_{-i}, \ell_i)$ denotes the compound alternative $x$ such
that $x_i = \ell$ and $x_j = y_j, \forall j \neq i$. The vector $0_N =
(0,\ldots,0)$ is the null alternative of $L$, and $k_N = (k_1,\ldots, k_n)$ is
the top element of $L$. For each $x \in L$, we denote by $S(x)=\{i\in N \mid
x_i>0\}$ the support of $x$, and by $K(x) = \{i \in N \vert x_i = k_i \}$ the
kernel of $x$.

\medskip

Let $x,y\in L$ and $T\subseteq N\setminus \{\emptyset\}$. $x_T$ is the restriction of $x$ to $T$. We write $x\leq y$ if $x_i\leq y_i$ for every $i\in N$, $x_T<k_T$ if $x_T\leq (k-1)_T$ and $x_T > 0_T$ if $x_T \geq 1_T$.

\medskip

The preferences of a Decision Maker (DM) over the alternatives are supposed to be represented by a function $v:L \rightarrow \reel$. For the sake of generality, we do not make any assumption on $v$, except that 
\begin{equation}
\label{Eqnul}
v(0_N) = 0.
\end{equation}

\medskip

For convenience, we assume from now on that all attributes have the same number of elements,
i.e., $k_i=k$ for every $i\in N$ ($k\in \mathbb{N}$).  Note that if this is not
the case, we set $k = \max_{i\in N} k_i$, and we extend $v: L \rightarrow \reel$ to $v': \{0,\ldots,k\}^N \rightarrow \reel$ by 
\[ v'(x) = v(y) \mbox{ where } y_i = \min(x_i,k_i) \ \forall i\in N .
\]
This amounts to duplicating the last element $k_i$ of $L_i$ when $k_i<k$.  Under
this assumption, we recover well-known concepts.

\medskip

When $k = 1$, $v$ is a \textit{pseudo-Boolean function} $v:\{0,1\}^N\rightarrow\reel$ vanishing at $0_N$.  It can be put in the form of a function $\mu:2^N \rightarrow \reel$, with $v(\emptyset)=0$, which is a \textit{game} in cooperative game theory.  A \textit{capacity} \citep{cho53} or \textit{fuzzy measure} \citep{sug74} is a monotone game, i.e., satisfying $v(A)\leq v(B)$ whenever $A\subseteq B$. For the general case (when $k\geq 1$), $v: L \rightarrow \reel$ fulfilling \ref{Eqnul} corresponds exactly to the concept of {\it multichoice game} \citep{hsra93}, and the numbers $0,1,\ldots,k$ in $L_i$ are seen as the level of activity of the players. A \textit{$k$-ary capacity} \citep{grla03b} is a multichoice game $v$ satisfying the monotonicity condition: or each $x, y \in L$ s.t. $x\leq y$, $v(x)\leq v(y)$ and the normalization condition: $v(k,\ldots\, k)=1$. Hence, a $k$-ary capacity represents a preference on $L$ which is increasing with the value of the attributes. 
We denote by $\mathcal{G}(L)$ the set of multichoice games defined on $L$. 

\medskip   
The derivative of $v\in\mathcal{G}(L)$ at $x\in L$ w.r.t. $i\in N$ such that $x_i<k$ is defined by
\begin{align*}
\Delta_i v(x) = v(x+1_i)-v(x).
\end{align*}
The derivative of $v\in\mathcal{G}(L)$ at $x\in L$ w.r.t. $T \subseteq N\setminus \{\emptyset\}$ such that $\forall i\in T, x_i<k$ is defined recursively as follows,
\begin{align*}
\Delta_T v(x) = \Delta_i (\Delta_{T\setminus i}v(x)).
\end{align*}
The general expression for the derivative of $v\in\mathcal{G}(L)$ is given by,
\begin{align*}
\Delta_T v(x) &= \sum_{A\subseteq T} (-1)^{t-a}v(x+1_A), \forall T \subseteq N\setminus \{\emptyset\}, \forall i\in T, x_i<k.
\end{align*}

\section{Values and interaction indices}
\label{SpreW}

\subsection{The case of classical TU-games}
\label{SubClass}

In cooperative game theory, the notion of value or power index is one of the
most important concepts. A {\it value} is a function $\phi: \mathcal{G}(2^N)
\rightarrow \reel^N$ which assigns a payoff vector to any game
$v\in\mathcal{G}(2^N)$. In MCDA, values are interpreted as importance indices
for criteria.  The {\it Shapley value} \citep{sha53} of player $i\in N$ is
given by
$$\phi_i(v) = \sum_{S\subseteq N\setminus i} \frac{(n-s-1)!s!}{n!}\big(v(S\cup i) - v(S)\big), \forall v\in\mathcal{G}(2^N).$$

The concept of interaction index, which is an extension of that of value, was
introduced axiomatically to measure the interaction phenomena among players in
cooperative game theory or criteria in multicriteria decision analysis. For a
game $v\in\mathcal{G}(2^N)$, the
{\it interaction index} of $v$ is a function $I^v: 2^N
\rightarrow \reel$ that assigns to every coalition $T\subseteq N$ its
interaction degree. 

\cite{mur93} proposed an interaction index $I(ij)$ for a pair of elements $i, j \in N$ to estimate how well $i$ and $j$ interact. \cite{gra97} defined and extended the interaction index to coalitions containing more than two players. The interaction index \citep{gra97} of a coalition $S\subseteq N$ in a game $v\in\mathcal{G}(2^N)$ is defined by
$$I^v_{Sh}(S) = \sum_{T\subseteq N\setminus S} \frac{(n-t-s)!t!}{(n-s+1)!}\sum_{K\subseteq S}(-1)^{s-k}v(K\cup T).$$
Note that when $S=\{i\}$, the interaction index coincides with the
Shapley value. 

A first axiomatization of the interaction index have been proposed by \cite{grro99}, and it is axiomatized in a way similar to the Shapley value. The following axioms have been considered by Grabisch and Roubens :
\begin{itemize}
\item Linearity axiom (L): $I^v(S)$ is linear on $\mathcal{G}(2^N)$ for every $S\subseteq N$.
\item Dummy axiom (D): For any $v\in\mathcal{G}(2^N)$, and any $i\in N$ dummy for $v$, $I^v(S\cup i)=0, \forall S\subseteq N\setminus i$.
\begin{center}
$i \in N$ is said to be dummy for $v$ if $\forall S\subseteq N\setminus i, v(S\cup i) = v(s)+v(i)$.
\end{center}

\item Symmetry axiom (S) : For any $v\in\mathcal{G}(2^N)$, any permutation $\sigma$ on $N$ and any $S\subseteq N\setminus \varnothing $, $I^v(S)=I^{\sigma v}(\sigma S).$
\item Efficiency axiom (E) :  For any $v\in\mathcal{G}(2^N)$ and any $i\in N$, $\sum_{i\in N} I^v(i)=v(N)$.
\item Recursive axiom (R1): For any $v\in\mathcal{G}(2^N)$ and any $S\subseteq N, s>1,$
$$I^v(S) = I^{v^{-j}_{\cup j}}(S\setminus j) - I^{v^{-j}}(S\setminus j), \forall j\in S,$$
where, $v^{-j}$ is the game $v$ restricted to elements in $N\setminus j$ defined by $v^{-j}(S)=v(S),\forall S\subseteq N\setminus j$, and $v^{-j}_{\cup j}$ is the game on $N\setminus j$ in the presence of $j$ defined by $v^{-j}_{\cup j}(S) = v(S\cup j) - v(S), \forall S\subseteq N\setminus j.$ 
\item Recursive axiom (R2): For any $v\in\mathcal{G}(2^N)$ and any $S\subseteq N, s>1,$
$$I^v_{Sh}(S) = I^{v_{[S]}}([S]) - \sum_{\substack{K\subseteq N\setminus S \\ K\neq \emptyset, S}}I^{v^{-K}}(S\setminus K),$$
where, $v_{[S]}$ is the game where all elements in $S$ are considered as a single element denoted $[S]$, it is defined by, for any $K\subseteq N\setminus S$:
\begin{align*}
v_{[S]}(K) &= v(K),\\
v_{[S]}(K\cup [S]) &= v(K\cup S).
\end{align*}
\end{itemize}
The axiom (R1) says that the interaction of the players in $S$ is equal to the
interaction between the criteria in $S\setminus j$ in the presence of $j$ minus
the interaction between the criteria of $S\setminus j$ in the absence of
$j$. Axiom (R2) expresses interaction of $S$ in terms of all successive interactions of subsets. The authors have shown that (R1) and (R2) are equivalent under (L), (D) and (S) axioms.

\medskip

The following theorem was shown by \cite{grro99}.
\begin{theorem}
Under axioms (L), (D), (D), (E), and ((R1) or (R2)), for all $v\in\mathcal{G}(2^N)$, 
$$I^v(S) = \sum_{T\subseteq N\setminus S} \frac{(n-t-s)!t!}{(n-s+1)!}\sum_{K\subseteq S}(-1)^{s-k}v(K\cup T), \forall S\subseteq N.$$
\end{theorem}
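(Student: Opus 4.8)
The plan is to strip the structure of $I^v(\cdot)$ off one axiom at a time: (L), (D) and (S) reduce $I^v(S)$ to a linear combination of ``derivative-type'' aggregates of $v$ carrying a few cardinality-dependent constants; (E) then fixes those constants in the singleton case (this is Shapley's theorem); and (R1) or (R2) yields a recurrence determining all the remaining ones. In detail, by (L) write $I^v(S)=\sum_{A\subseteq N}a^S_A\,v(A)$ with coefficients $a^S_A$ independent of $v$. Fix $i\in S$ and apply (D) to every game in which $i$ is a null player (such games are exactly those determined by their restriction to $N\setminus i$); linearity then forces $a^S_A+a^S_{A\cup i}=0$ for all nonempty $A\subseteq N\setminus i$. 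Doing this for each $i\in S$ and telescoping gives $a^S_A=(-1)^{\,s-|A\cap S|}\,b^S_{A\setminus S}$ for suitable numbers $b^S_L$, so that, using $v(\emptyset)=0$,
\[
I^v(S)=\sum_{L\subseteq N\setminus S}b^S_L\sum_{K\subseteq S}(-1)^{s-k}v(K\cup L).
\]
By (S), $b^S_L$ depends only on $n$, $s=|S|$ and $\ell=|L|$; write $b^S_L=p^n_{s,\ell}$. It remains to identify these constants.

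For $s=1$ the display becomes $I^v(\{i\})=\sum_{L\subseteq N\setminus i}p^n_{1,\ell}\bigl(v(L\cup i)-v(L)\bigr)$, and imposing (E) for all $v$ yields a triangular linear system in the $p^n_{1,\ell}$ with unique solution $p^n_{1,\ell}=\frac{(n-\ell-1)!\,\ell!}{n!}$, the Shapley coefficients; equivalently, restricted to singletons the axioms (L), (D), (S), (E) amount to Shapley's, so $I^v(\{i\})=\phi_i(v)$.

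For $s\geq 2$, fix any $j\in S$ and substitute the displayed form of the index --- over the ground set $N$ for $I^v(S)$, and over $N\setminus j$ for $I^{v^{-j}}(S\setminus j)$ and $I^{v^{-j}_{\cup j}}(S\setminus j)$ --- into (R1). Simplifying the alternating sums, the right-hand side regroups into the same combination of aggregates $\sum_{K\subseteq S}(-1)^{s-k}v(K\cup L)$ that appears on the left, but now with the coefficients $p^{\,n-1}_{s-1,\ell}$ of the $(n-1)$-player index in place of $p^n_{s,\ell}$; since these aggregates are linearly independent as functionals of $v$, (R1) holds for all $v$ iff $p^n_{s,\ell}=p^{\,n-1}_{s-1,\ell}$ for every $\ell$. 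Iterating down to $s=1$ and using the singleton values gives $p^n_{s,\ell}=p^{\,n-s+1}_{1,\ell}=\frac{(n-s-\ell)!\,\ell!}{(n-s+1)!}$, exactly the coefficient $\frac{(n-t-s)!\,t!}{(n-s+1)!}$ of the statement (with $t=\ell$); substituting back into the displayed form gives the claimed expression. Since (R1) and (R2) are equivalent under (L), (D), (S) --- as noted above --- the same conclusion follows from (R2).

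I expect the substantive step to be the first one: showing that (L) and (D) already force $v\mapsto I^v(S)$ to be a combination of the aggregates $\sum_{K\subseteq S}(-1)^{s-k}v(K\cup L)$. This needs a precise description of the subspace of games having a prescribed null (hence dummy) player, the telescoping of the coefficients $a^S_A$ over the elements of $S$, and some bookkeeping of the empty-coalition terms; once the shape of the index is secured, the imposition of (E) in the singleton case and the recurrence coming from (R1)/(R2) are short computations with binomials and factorials.
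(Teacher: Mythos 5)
The paper itself never proves this statement: it is recalled from \citet{grro99} (with the hypothesis list ``(L), (D), (D), (E)'' clearly meant to read ``(L), (D), (S), (E)''), and the nearest in-house analogue is the proof of Theorem~\ref{THEO INT Sha} for multichoice games. Measured against that, your route is genuinely different and, in outline, correct. The paper (following Grabisch and Roubens) does not identify coefficients across ground sets; it uses recursivity to establish an (R2)-type identity (Lemma~\ref{Lemma R}) writing $I^v(T)=\sum_{A\subseteq T,\,A\neq\emptyset}(-1)^{t-a}I^{w_A}([A])$, where $w_A$ merges the coalition $A$ into a single macro-player, and then plugs in the already-characterized singleton index (the value) of these reduced games. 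You instead keep the representation $I^v(S)=\sum_{L\subseteq N\setminus S}p^n_{s,\ell}\sum_{K\subseteq S}(-1)^{s-k}v(K\cup L)$ obtained from (L), the null-player consequence of (D), and (S), pin down $p^n_{1,\ell}$ by (E) (this is Shapley's theorem), and read the recursion $p^n_{s,\ell}=p^{n-1}_{s-1,\ell}$ directly off (R1), using linear independence of the functionals $v\mapsto\sum_{K\subseteq S}(-1)^{s-k}v(K\cup L)$ (Dirac games $\delta_{S\cup L}$ give this at once); iterating yields $p^n_{s,\ell}=\frac{(n-s-\ell)!\,\ell!}{(n-s+1)!}$. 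Your way dispenses with the merged-coalition construction; the paper's way dispenses with any independence argument and reuses the value characterization wholesale. Your telescoping step $a^S_A=(-1)^{s-|A\cap S|}b^S_{A\setminus S}$ is fine, including the point that for $A\subseteq S$ the chain through two-element subsets forces all $a^S_{\{j\}}$, $j\in S$, to coincide.

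Two points you should make explicit. First, the regrouping of the right-hand side of (R1) works only with the reduced game ``in the presence of $j$'' defined as $v^{-j}_{\cup j}(A)=v(A\cup j)-v(j)$, so that it vanishes at $\emptyset$ and the constant $v(j)$ cancels in the alternating sum over $K\subseteq S\setminus j$ when $s\geq 2$; with the formula as printed in Section~\ref{SubClass}, $v^{-j}_{\cup j}(A)=v(A\cup j)-v(A)$, the reduced object is not even a game and an extra term $-\sum_{K\subseteq S\setminus j}(-1)^{s-1-k}v(K\cup L)$ survives, so the claimed regrouping fails (already for $n=s=2$); that displayed definition is a typo of the source, and your proof should state which definition it uses. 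Second, (R1) compares indices on $N$ and on $N\setminus j$, so the argument presupposes that the interaction index is defined for every ground set and that (L), (D), (S), (E) are imposed on each of them; you use this implicitly when writing $p^{n-1}_{s-1,\ell}$, and it deserves a sentence.
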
  

\subsection{The case of games on lattices}
\label{SubGener}

\cite{grla07b} generalized the notion of interaction defined for criteria modelled by capacities, by considering functions defined on lattices. The interaction \citep{grla07b} is based on the notion of derivative of a function defined on a lattice. For this,
they introduce the following definitions:

Let $i=(0_{-j}, i_j)$ with $i_j\in L_j, j\in N$. Let $x, y \in L$ with $y=\vee_{k=1}^{n}i_k$ and $v\in\mathcal{G}(L)$. The derivative of $v$ w.r.t. $i$ at point $x\in L$ is given by: 
$$\Delta_i v(x) = v(x \vee i)-v(x),$$ 
and the derivative of $v$ w.r.t. $y$ at $x$ is given by:
$$\Delta_y v(x) = \Delta_{i_1}(\Delta_{i_2}(\ldots \Delta_{i_n} v(x) \ldots )).$$ 
The following definition has been proposed by Grabisch and Labreuche \citep{grla07b} : 
\begin{definition}
Let $J\subseteq N$, and $x=\vee_{j\in J} i_j$, with $i_j=(0_{-j}, \ell_j), \ell_j\in L_j\setminus \{0\}$.
$$I^v(x) = \sum_{y\in A(x)} \alpha^{j}_{h(y)} \Delta_x v(y),$$
where, $A(x)=\{y\in L \vert y_j=k \text{ or } 0 \text{ if } j\notin J, y_j = x_j-1 \text{ else } \}$, $h(y)$ is the number of components of $y$ to $k$ 
and $\alpha^{j}_{h(y)} = \frac{(n-j-h(y))!h(y)!}{(n-j+1)!}$.
\end{definition}

\section{Characterization of the importance index for multichoice games}\label{sec:imin}
In this section, we present the importance index (value) for multichoice games defined
by \cite{rigrla17b} together with its axiomatization.
%In this section, we present an axiomatisation of two kinds of importance indices based on Shapley and Banzhaf values. 
%\cite{rigrla17a} have already generalized and axiomatized the notion of importance index for multichoice games,
%which is done by introducing axioms similar to the original ones of Shapley that we recalled above. 
Let $\phi$ be a value defined for any $v\in \mathcal{G}(L)$.

\medskip

\begin{quote}
\textbf{Linearity axiom (L) }: $\phi$ is linear on $\mathcal{G}(L)$, i.e., $\forall v, w \in \mathcal{G}(L), \forall\alpha\in\mathbb{R},$ 
$$\phi_i(v+\alpha w)=\phi_i(v)+\alpha\phi_i(w), \forall i\in N.$$
\end{quote}

An attribute $i\in N$ is said to be {\it null} for $v\in\mathcal{G} (L)$ if
$$v(x+1_i)=v(x), \forall x \in L, x_i<k.$$
\begin{quote}
\textbf{Null axiom (N):} If an attribute $i$ is null for $v\in\mathcal{G} (L)$, then 
$$\phi_i(v)=0.$$
\end{quote}

Let $\sigma$ be a permutation on $N$. For all $x\in L$, we denote $\sigma(x)_{\sigma(i)}=x_i$. For all $v\in\mathcal{G} (L)$,
the game $\sigma\circ v$ is defined by $\sigma\circ v (\sigma (x))=v(x)$.
\begin{quote}
\textbf{Symmetry axiom (S):} For any permutation $\sigma$ of $N$, 
$$\phi_{\sigma(i)}(\sigma\circ v)=\phi_i(v), \forall i\in N.$$
\end{quote}

\begin{quote}
\textbf{Invariance axiom (I):} Let us consider two games $v, w \in\mathcal{G} (L) $ such that, for some $i \in N$,
$$v(x+1_i)-v(x) = w(x)- w(x-1_i), \forall x\in L, x_i\notin \lbrace 0, k\rbrace$$
$$v(x_{-i}, 1_i) - v(x_{-i}, 0_i) = w(x_{-i}, k_i) - w(x_{-i}, k_i-1), \forall x_{-i}\in L_{-i},$$
then $\phi_i(v) = \phi_i(w)$.
\end{quote}

\medskip

\begin{quote}
\textbf{Efficiency axiom (E):} For all $v\in\cG(L)$,
\[
\sum_{i \in N} \phi_i(v)=\sum_{\substack{x\in L\\x_j<k}}\big(v(x+1_N)-v(x)\big) .
\]
\end{quote}

\cite{rigrla17a} have shown the following result.
\begin{theorem}
\label{THEO Imp LNISE}
Let $\phi$ be a value defined for any $v\in\mathcal{G} (L)$.
\begin{enumerate}
\item If $\phi$ fulfills (L) and (N) then there exists a family of real constants $\{b^i_x , x \in L \}$ such that
\begin{align}
\label{Imp LN}
\phi_i(v)=\sum_{\substack{x\in L\\x_i < k}} b_x^i \big(v(x+1_i)-v(x)\big), \forall i\in N.
\end{align}
\item If $\phi$ fulfills (L), (N) and (I) then
\begin{align}
\label{Imp LNI}
\phi_i(v)=\sum_{x_{-i}\in L_{-i}} b_{x_{-i}}^i \big(v(x_{-i}, k)-v(x_{-i}, 0)\big), \forall i\in N.
\end{align}
\item If $\phi$ fulfills (L), (N), (I) and (S) then 
\begin{align}
\label{Imp LNSI}
\phi_i(v)=\sum_{x_{-i}\in L_{-i}} b_{n(x_{-i})} \big(v(x_{-i}, k)-v(x_{-i}, 0)\big), \forall i\in N,
\end{align}
where $n(x_{-i})=(n_0,n_1,\ldots,n_k)$ with $n_j$ the number of components of $x_{-i}$ being equal to $j$.
\item If $\phi$ fulfills (L), (N), (I), (S) and (E) then
\begin{align}
\label{Importance}
\phi_i(v) = \sum_{x_{-i}\in L_{-i}} \frac{\big(n-\sigma(x_{-i})-1\big)!\kappa(x_{-i})!}{\big(n+\kappa(x_{-i})-\sigma(x_{-i})\big)!}\big(v(x_{-i}, k)-v(x_{-i}, 0)\big), \forall i\in N
\end{align}
\end{enumerate}
\end{theorem}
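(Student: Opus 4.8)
The plan is to establish the four assertions in turn, each time exploiting the newly added axiom to restrict further the admissible form of $\phi_i$. Fix $i\in N$ and consider the linear operator $D_i\colon\mathcal{G}(L)\to\reel^{\{x\in L\,:\,x_i<k\}}$, $(D_iv)(x)=v(x+1_i)-v(x)$; by definition $\Ker D_i$ is exactly the set of games for which $i$ is null. Since (L) makes $\phi_i$ a linear form on $\mathcal{G}(L)$ and (N) forces $\phi_i$ to vanish on $\Ker D_i$, $\phi_i$ factors through $\mathcal{G}(L)/\Ker D_i\cong\Im D_i$. Counting dimensions, $\dim\Im D_i=(k+1)^n-(k+1)^{n-1}=k(k+1)^{n-1}=|\{x\in L:x_i<k\}|$, so $D_i$ is surjective, and hence there is a \emph{unique} family $(b^i_x)_{x_i<k}$ with $\phi_i(v)=\sum_{x:\,x_i<k}b^i_x\,(v(x+1_i)-v(x))$, which is \eqref{Imp LN}; the surjectivity (and the resulting uniqueness) will be used repeatedly below.

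For \eqref{Imp LNI} the decisive step is to read off what (I) says at the level of $D_i$. Fix $x_{-i}\in L_{-i}$ and set $\delta^v_\ell:=v(x_{-i},\ell+1)-v(x_{-i},\ell)$ for $\ell=0,\dots,k-1$. The two conditions of (I) say exactly that $\delta^v_\ell=\delta^w_{\ell-1}$ for $\ell=1,\dots,k-1$ and $\delta^v_0=\delta^w_{k-1}$; in other words, fibre by fibre and simultaneously over all fibres, the derivative vector of $w$ along $i$ is a one-step cyclic shift of that of $v$. Any such $w$ lies in $\mathcal{G}(L)$: integrate the shifted derivatives while keeping $w(x_{-i},0)=v(x_{-i},0)$, so that $w(0_N)=0$. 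Writing $\phi_i(v)=\phi_i(w)$ through \eqref{Imp LN} and using that $D_iv$ ranges over everything while $D_iw$ is its cyclic shift forces $b^i_{(x_{-i},\ell)}$ to be independent of $\ell$; calling the common value $b^i_{x_{-i}}$ and telescoping $\sum_{\ell}\delta^v_\ell=v(x_{-i},k)-v(x_{-i},0)$ yields \eqref{Imp LNI}.

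Assertion \eqref{Imp LNSI} is the standard symmetry reduction: applying (S) to a permutation $\sigma$ fixing $i$, rewriting $\phi_i(\sigma\circ v)$ via \eqref{Imp LNI}, and substituting $x_{-i}\mapsto\sigma(x_{-i})$ gives $b^i_{\sigma(x_{-i})}=b^i_{x_{-i}}$ for all such $\sigma$ (the numbers $v(x_{-i},k)-v(x_{-i},0)$ can be prescribed freely, so coefficients are identified term by term); hence $b^i_{x_{-i}}$ depends only on the type vector $n(x_{-i})$, and a permutation exchanging $i$ with another index removes the dependence on $i$ too. For \eqref{Importance}, I would substitute \eqref{Imp LNSI} for every $i$ into (E) and, for each $y\in L\setminus\{0_N\}$, compare the coefficient of $v(y)$ on the two sides. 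On the right one obtains $\mathbf 1_{[S(y)=N]}-\mathbf 1_{[K(y)=\emptyset]}$; on the left, since all indices $i$ with $y_i=k$ give the same type (and likewise all $i$ with $y_i=0$), one obtains $c_k\,b_{c-e_k}-c_0\,b_{c-e_0}$, where $c=n(y)$ and $e_j$ is the $j$-th unit vector (so $c-e_j$ is $c$ with its entry at level $j$ lowered by $1$, a term with vanishing coefficient being discarded). Thus the $b$'s must solve $c_k\,b_{c-e_k}-c_0\,b_{c-e_0}=\mathbf 1_{[c_0=0]}-\mathbf 1_{[c_k=0]}$ for all type vectors $c$ with $|c|=n$, $c\neq(n,0,\dots,0)$. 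A short factorial computation shows that $b_d=\dfrac{d_0!\,d_k!}{(d_0+d_k+1)!}$ — which is exactly $\dfrac{(n-\sigma(x_{-i})-1)!\,\kappa(x_{-i})!}{(n+\kappa(x_{-i})-\sigma(x_{-i}))!}$ once $d=n(x_{-i})$ — solves it; uniqueness follows by checking that the homogeneous system $c_k\,b_{c-e_k}=c_0\,b_{c-e_0}$ has only the zero solution, first on types with $d_0=0$ (take $c=d+e_k$), then on all types via the recursion $b_d=\tfrac{d_k}{d_0+1}\,b_{d+e_0-e_k}$ and its boundary instances, the single type $(n-1,0,\dots,0)$ being reached through the chain of types supported only on levels $0$ and $k$.

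The hard part is the last assertion: translating (E) faithfully into the recursion on the $b$'s — in particular, spotting that the $v(y)$-coefficients collapse to a function of the type of $y$ — then recognising the closed-form solution and, above all, proving uniqueness, i.e.\ that the homogeneous recursion annihilates every coefficient; the boundary bookkeeping (types with no component at level $k$ or none at level $0$, and the exceptional type $(n-1,0,\dots,0)$) is where the delicate part lies. A smaller but genuine prerequisite for the earlier assertions is the surjectivity of the relevant linear maps — $D_i$ for \eqref{Imp LN}, and $v\mapsto\big(v(x_{-i},k)-v(x_{-i},0)\big)_{x_{-i}\in L_{-i}}$ when identifying coefficients in \eqref{Imp LNI} and \eqref{Imp LNSI} — which is exactly what legitimises the ``arbitrary game'' arguments and pins the coefficients down. (Conversely, that the formula in \eqref{Importance} does satisfy all five axioms is a routine direct verification.)
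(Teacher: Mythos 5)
Your argument is correct, and since the paper only quotes Theorem~\ref{THEO Imp LNISE} from \cite{rigrla17a} without reproducing its proof, the natural comparison is with the parallel development the paper does give for the interaction index (Propositions~\ref{PROP L}--\ref{PROP LNIS} and Theorem~\ref{THEO INT Sha}). For part (i) you use a quotient/dimension-count argument: $\phi_i$ is a linear form vanishing on $\Ker D_i$, and $D_i$ is onto $\RR^{\{x\,:\,x_i<k\}}$ by counting dimensions; the paper's analogue (Proposition~\ref{PROP LN}) instead expands over the Dirac games $\delta_x$ and reorganises the coefficients through two combinatorial lemmas, so your route is shorter and, usefully, makes explicit the surjectivity that legitimises all subsequent coefficient identifications, a point the paper leaves implicit. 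Parts (ii) and (iii) coincide in substance with the paper's treatment of (I) and (S) in Propositions~\ref{PROP LNI} and~\ref{PROP LNS} (cyclic-shift of the fibrewise derivative plus telescoping, then the two-stage permutation argument). Part (iv) is where you genuinely diverge: you substitute \eqref{Imp LNSI} into (E), compare the coefficient of each $v(y)$ with $y\neq 0_N$, obtain the system $c_k b_{c-e_k}-c_0 b_{c-e_0}=\mathbf{1}_{[c_0=0]}-\mathbf{1}_{[c_k=0]}$ on type vectors, exhibit the closed-form solution and prove uniqueness via the homogeneous recursion, correctly isolating the excluded type $(n,0,\ldots,0)$ (which carries no constraint because $v(0_N)=0$) and resolving the type $(n-1,0,\ldots,0)$ through the chain of types supported on levels $0$ and $k$; by contrast, in the interaction setting the paper never solves such a system, but determines the coefficients by combining the recursivity axiom (Lemma~\ref{Lemma R}) with the already-known value formula. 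I checked the computations you leave as short: the right-hand coefficient of $v(y)$ in (E) is indeed $\mathbf{1}_{[S(y)=N]}-\mathbf{1}_{[K(y)=\emptyset]}$, the left-hand one is $c_k b_{c-e_k}-c_0 b_{c-e_0}$ with $c=n(y)$, and $b_d=d_0!\,d_k!/(d_0+d_k+1)!$ is exactly the announced coefficient since $n-\sigma(x_{-i})-1=d_0$ and $n+\kappa(x_{-i})-\sigma(x_{-i})=d_0+d_k+1$; the only remaining item, the direct verification that the closed formula satisfies the axioms, is routine and not even required by the one-directional statement of the theorem.
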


\section{Axiomatization of the interaction index}
\label{Saxiom}

In this section we intend to define axiomatically the interaction index of
multichoice games. The approach presented here is based on a recursion formula,
starting from the importance index (value) defined in Section~\ref{sec:imin}, as
in \citep{grro99}. An interaction index of the $k$-ary multichoice game $v\in
\mathcal{G}(L)$ is a function $I^v : 2^N \rightarrow \mathbb{R}$.

\bigskip
The first axiom (L) is trivially generalized for the interaction index.

\begin{quote}
\textbf{Linearity axiom (L) }: $I^v$ is linear on $\mathcal{G}(L)$, i.e., $\forall v, w \in \mathcal{G}(L), \forall\alpha\in\mathbb{R},$ 
$$  I^{v+\alpha w} = I^v + \alpha I^w .$$
\end{quote}
\begin{proposition} \label{PROP L}
Under \textbf{(L)}, for every $T\subseteq N\setminus\{\emptyset\}$, there exists real constants $ a^T_x $, for all $x\in L$, such that for every $v\in\mathcal{G}(L)$   
\begin{align}
\label{F L}
I^v(T)=\sum_{x\in L} a_x^T v(x).
\end{align}
\end{proposition}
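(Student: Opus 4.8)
The plan is to exploit the fact that $\mathcal{G}(L)$ is a finite-dimensional real vector space and that, for a fixed $T$, the map $v \mapsto I^v(T)$ is a linear functional on it by axiom \textbf{(L)}; any linear functional on a finite-dimensional space is automatically of the announced form once a basis has been chosen. So the whole content is to produce a convenient basis and read off the coefficients.

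First I would identify such a basis of $\mathcal{G}(L)$. For each $y \in L$ with $y \neq 0_N$, define the game $e_y \in \mathcal{G}(L)$ by $e_y(x) = 1$ if $x = y$ and $e_y(x) = 0$ otherwise; this is a legitimate element of $\mathcal{G}(L)$ since $y \neq 0_N$ forces $e_y(0_N) = 0$. These games are linearly independent (evaluate a vanishing linear combination at each $y$), and they span $\mathcal{G}(L)$: for any $v \in \mathcal{G}(L)$, the games $v$ and $\sum_{y \in L,\, y \neq 0_N} v(y)\, e_y$ agree at every point of $L$ — at $0_N$ both equal $0$ by \eqref{Eqnul}, and at any $x \neq 0_N$ both equal $v(x)$. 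Hence $\{e_y : y \in L,\, y \neq 0_N\}$ is a basis and $\dim \mathcal{G}(L) = |L| - 1$.

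The second step is routine: axiom \textbf{(L)}, stated for the sum of two games and for a scalar multiple, extends by an immediate induction to any finite linear combination, so applying it to $v = \sum_{y \neq 0_N} v(y)\, e_y$ gives $I^v(T) = \sum_{y \in L,\, y \neq 0_N} v(y)\, I^{e_y}(T)$. It then suffices to set $a_x^T := I^{e_x}(T)$ for every $x \in L$ with $x \neq 0_N$ and $a_{0_N}^T := 0$; since $v(0_N) = 0$, inserting the (vanishing) term for $x = 0_N$ changes nothing, and we obtain $I^v(T) = \sum_{x \in L} a_x^T v(x)$ with the constants $a_x^T$ independent of $v$, as required.

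I do not expect a genuine obstacle: the statement is essentially a restatement of ``a linear map on a finite-dimensional space is given by its coefficients against a basis.'' The only points deserving a line of care are that $\mathcal{G}(L)$ is indeed closed under the linear operations (because the single defining constraint $v(0_N)=0$ is linear and homogeneous) and that \textbf{(L)} as written must be bootstrapped to arbitrary finite sums before it can be applied to the basis expansion of $v$.
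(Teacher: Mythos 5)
Your proof is correct and takes essentially the same route as the paper's, which also decomposes $v$ as $\sum_{x\in L} v(x)\delta_x$ over Dirac (indicator) games and applies \textbf{(L)} to set $a_x^T = I^{\delta_x}(T)$. Your additional care in excluding $0_N$ from the basis (so that each basis game satisfies $v(0_N)=0$) and in bootstrapping \textbf{(L)} to arbitrary finite linear combinations only makes explicit two points the paper leaves implicit.
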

\begin{proof}[\bf Proof ]
It is easy to check that the above formula satisfies the linearity axiom. Conversely, we consider $I^v$ satisfying \textbf{(L)}. We have $\forall v\in\mathcal{G} (L), v = \sum_{x\in L} v(x)\delta_x$. Then by \textbf{(L)},
$$I^v(T)=\sum_{x\in L} v(x) I^{\delta_x}(T), \forall T \subseteq N\setminus\{\emptyset\} $$
Setting $a^T_x=I^{\delta_x}(T), \forall x\in L, \forall T \subseteq N $, we obtain the wished result.
\end{proof}
\begin{remark}
Let $i\in N$ be a null criterion for $v\in\mathcal{G} (L)$. We have,  
$$\forall T\subseteq N, T\ni i, \Delta_T v(x) = 0, \forall x\in L, x+1_T \leq k_T.$$
$$\forall T\subseteq N, T\ni i, \Delta_j v(x)=0, \forall j\in T, \forall x\in L, x+1_j \leq k.$$
\end{remark}

\begin{quote}
\textbf{Null axiom (N):} If a criterion $i$ is null for $v\in\mathcal{G} (L)$, then  for all $T\subseteq N$ such that $T\ni i$,
$I^v(T)=0$.  
\end{quote}

\begin{proposition} \label{PROP LN}
Under axioms \textbf{(L)} and \textbf{(N)}, for every $T\subseteq N\setminus\{\emptyset\}$, there exist real constants $ b^T_x $, for all $x\in L$, with $x+1_T \leq k_T$, such that for every $v\in\mathcal{G} (L)$   
\begin{align}
\label{LN}
I^v(T)=\sum_{\substack{x\in L\\x_T < k_T}} b_x^T \Delta_T v(x).
\end{align}
\end{proposition}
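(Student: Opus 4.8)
The starting point is the representation $I^v(T)=\sum_{x\in L}a^T_x v(x)$ from Proposition~\ref{PROP L}. The plan is to show that the null axiom forces enough linear relations among the coefficients $a^T_x$ that the sum can be re-indexed as a combination of the derivatives $\Delta_T v(x)$ over those $x$ with $x_T<k_T$. Concretely, expanding $\Delta_T v(x)=\sum_{A\subseteq T}(-1)^{t-a}v(x+1_A)$ and collecting terms, any expression of the form $\sum_{x:x_T<k_T} b^T_x\Delta_T v(x)$ is automatically of the form $\sum_x a^T_x v(x)$ with a specific linear dependence of the $a$'s on the $b$'s; so the content of the proposition is the converse, namely that (N) pins $I^v$ down to exactly this subspace.

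The key step is to exhibit, for each fixed $i\in T$, a family of games on which $I^v(T)$ must vanish by (N), and to read off the resulting constraints on $\{a^T_x\}_{x\in L}$. I would take $i\in T$ and consider games $v$ for which $i$ is null, i.e.\ $v(x+1_i)=v(x)$ for all $x$ with $x_i<k$. A convenient spanning set for this subspace of $\mathcal G(L)$ is obtained from the unanimity-type games $\delta_y$ "symmetrised in the $i$-th coordinate": for each $y\in L_{-i}$ put $w_y=\sum_{\ell=0}^{k}\delta_{(y,\ell_i)}$ (and more generally one needs all games constant along the $i$-axis vanishing at $0_N$, spanned by differences $w_y-w_{y'}$, but the $w_y$ together with the requirement $v(0_N)=0$ suffice once one is careful). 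Applying (N) to each such game gives $\sum_{\ell=0}^k a^T_{(y,\ell_i)}=0$ for every $y\in L_{-i}$. Doing this for every $i\in T$ yields: $a^T_x$, viewed as a function of $x$, has vanishing sum along every line parallel to a coordinate axis $i\in T$. Standard finite-difference/inclusion–exclusion bookkeeping then shows that any such array is a (unique) linear combination of the "elementary difference" arrays coming from $\Delta_T$, i.e.\ there exist constants $b^T_x$ for $x$ with $x_T<k_T$ with $a^T_x=\sum_{A\subseteq T}(-1)^{t-a}b^T_{x-1_A}$ (with the convention that $b^T$ vanishes outside its range), which is exactly \eqref{LN}.

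The main obstacle is this last combinatorial step: showing that "zero sum along every axis-parallel line in the directions of $T$" is equivalent to "lies in the image of the discrete $|T|$-fold difference operator $\Delta_T$," together with checking that the correspondence $b\mapsto a$ is a bijection onto that subspace so that the $b^T_x$ are well defined. I expect to handle this by induction on $t=|T|$: the case $t=1$ is the telescoping identity $\sum_{\ell} a_{(y,\ell)}=0\iff a_{(y,\ell)}=b_{(y,(\ell-1))}-b_{(y,\ell)}$ for a uniquely determined $b$ supported on $\ell<k$ (set $b_{(y,(\ell-1))}=-\sum_{m\le \ell-1}a_{(y,m)}$ and use the zero-sum condition to see it also equals $\sum_{m\ge \ell}a_{(y,m)}$, so it is consistent and supported on $\ell<k$), and the inductive step applies this in the last coordinate of $T$ to the family $\Delta_{T\setminus i}v$, using that $i$ null for $v$ implies $i$ null for $v$ in each such "reduced" game, as already noted in the Remark. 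Conversely one checks directly that formula \eqref{LN} satisfies (N), via the Remark stating $\Delta_T v(x)=0$ whenever $T\ni i$ and $i$ is null, which is the easy direction.
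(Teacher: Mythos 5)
Your proposal is correct in substance and follows the same skeleton as the paper's proof: start from the representation $I^v(T)=\sum_{x\in L}a^T_x v(x)$ of Proposition~\ref{PROP L}, use \textbf{(N)} to force the coefficient sums along every coordinate line in a direction $i\in T$ to vanish, and then show that coefficient arrays with this zero-sum property are exactly those of the form $a^T_x=\sum_{A\subseteq T}(-1)^{t-a}b^T_{x-1_A}$, i.e.\ that $I^v(T)$ is a combination of the derivatives $\Delta_T v(x)$, $x_T<k_T$. Where you genuinely differ is in how this combinatorial step is carried out: the paper defines the $b^T_x$ explicitly as signed tail sums of the $a^T_x$ over configurations $(x_A,0_B,k_{T\setminus A\cup B},x_{-T})$ and then verifies the re-summation through two bespoke lemmas (the inclusion--exclusion recovery of $a$ from these tail sums in Lemma~\ref{Lemma1 LN} and the rearrangement identity in Lemma~\ref{Lemma2 LN}), whereas you propose an induction on $t=|T|$ that reduces everything to the one-dimensional telescoping identity, the zero-sum constraints being inherited by the partial-sum arrays in the remaining directions of $T$. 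Your route is more modular and avoids the heavy multi-index bookkeeping of the two lemmas; the paper's route has the advantage of producing closed-form expressions for the $b^T_x$ in terms of the $a^T_x$. Note that only existence of the $b^T_x$ is needed, so the bijectivity check you mention is optional, and the inductive step is cleanest if phrased directly on the coefficient arrays rather than on reduced games. One point to make explicit in a full write-up (the paper glosses over it too): for $y=0_{-i}$ the game $\sum_{\ell}\delta_{(0_{-i},\ell_i)}$ is not in $\mathcal{G}(L)$ since it does not vanish at $0_N$, and \textbf{(N)} does not in fact force $\sum_{\ell}a^T_{(0_{-i},\ell_i)}=0$; however, $a^T_{0_N}$ multiplies $v(0_N)=0$ and is therefore a free parameter, and summing the genuine constraints over all lines with $x_{-i}\neq 0_{-i}$ shows that $\sum_{\ell\geq 1}a^T_{(0_{-i},\ell_i)}$ equals $\sum_{x\neq 0_N}a^T_x$ for every $i\in T$, so a single choice of $a^T_{0_N}$ makes the origin lines comply as well and your induction then goes through unchanged.
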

To prove this result, the following lemmas are useful.
\begin{lemma} \label{Lemma1 LN}
Let $A\subseteq N$.
\begin{align*}
a_{(x_A, x_{-A})}
= \sum_{C\subseteq A} (-1)^{a-c} \sum_{\substack{\ell_C=x_C \\ \ell_{A\setminus C}=(x+1)_{A\setminus C} }}^{k_A} a_{(\ell_C, \ell_{A\setminus C}, x_{-A})}, \forall x_A \in L_A\setminus \{k_A\}. 
\end{align*}
\end{lemma}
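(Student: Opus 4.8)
The plan is to prove Lemma~\ref{Lemma1 LN} as a purely combinatorial telescoping identity, valid for \emph{any} family of reals $\{a_x\}_{x\in L}$; no axiom is needed here (the axioms \textbf{(L)} and \textbf{(N)} produced the coefficients $a_x$ and will constrain them in the next step, but the identity itself is unconditional). The key ingredient is the one-dimensional telescoping relation: for a function $F$ on $\{0,\ldots,k\}$ and any $x_i\in\{0,\ldots,k\}$,
$$\sum_{\ell_i=x_i}^{k}F(\ell_i)\ -\ \sum_{\ell_i=x_i+1}^{k}F(\ell_i)\ =\ F(x_i),$$
with the convention that an empty sum is $0$, so that the case $x_i=k$ is covered as well. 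Accordingly, I would establish the identity for every $x_A\in L_A$; the restriction $x_A\neq k_A$ in the statement is then automatic.

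I would argue by induction on $a=|A|$, the tail $x_{-A}$ being frozen throughout. For $a=0$ both sides equal $a_x$. For $a=1$, $A=\{i\}$: the right-hand side is $\sum_{\ell_i=x_i}^{k}a_{(\ell_i,x_{-i})}-\sum_{\ell_i=x_i+1}^{k}a_{(\ell_i,x_{-i})}$, which equals $a_{(x_i,x_{-i})}$ by the telescoping relation. For the inductive step, fix $i\in A$ and put $A'=A\setminus i$. Splitting the sum over $C\subseteq A$ according to whether $i\in C$ --- that is, $C=C'\cup i$ or $C=C'$ with $C'\subseteq A'$ --- the sign $(-1)^{a-c}$ equals $(-1)^{a'-c'}$ in the first case and $-(-1)^{a'-c'}$ in the second, while in both cases the box-sum over $\ell_A$ factorises as a sum over $\ell_i$ (ranging from $x_i$, respectively $x_i+1$, up to $k$) times one and the same box-sum over $\ell_{A'}$. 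Grouping, for each $C'\subseteq A'$, the two contributions, the $\ell_i$-part collapses by the telescoping relation and forces $\ell_i=x_i$; what is left is exactly the right-hand side of the lemma attached to the set $A'$ with frozen tail $(x_i,x_{-A})$, and the induction hypothesis turns it into $a_{(x_{A'},x_i,x_{-A})}=a_{(x_A,x_{-A})}$.

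The only delicate point is notational bookkeeping: one has to keep straight the decomposition of a running index $\ell_A$ into the blocks $\ell_C$, $\ell_{A\setminus C}$ and the frozen tail $x_{-A}$, and to recognise that for $C\subseteq A$ the inner sum $\sum_{\ell_C=x_C,\ \ell_{A\setminus C}=(x+1)_{A\setminus C}}^{k_A}$ is the sum over the box $\prod_{j\in C}[x_j,k]\times\prod_{j\in A\setminus C}[x_j+1,k]$, which is what makes the factorisation used in the inductive step visible; there is no conceptual obstacle. An equivalent, induction-free route is operator-theoretic: for each $j\in A$ let $S_j$ denote the ``cumulative sum from above'' and $\delta_j$ the forward difference, both acting on functions of $\ell_A$, so that $\delta_jS_j=\mathrm{id}$ and operators for distinct indices commute; then $f=\big(\prod_{j\in A}\delta_j\big)\big(\prod_{j\in A}S_j\big)f$, and expanding $\prod_{j\in A}\delta_j=\prod_{j\in A}(\mathrm{id}-\tau_j)=\sum_{C\subseteq A}(-1)^{a-c}\prod_{j\in A\setminus C}\tau_j$ --- with $\tau_j$ the unit up-shift of coordinate $j$ --- and evaluating at $\ell_A=x_A$ reproduces the claimed formula term by term.
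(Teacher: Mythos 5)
Your proof is correct and takes essentially the same route as the paper's: both argue by induction on $|A|$, single out an element $i\in A$, split the subsets $C$ according to whether they contain $i$, and collapse the two $\ell_i$-ranges $[x_i,k]$ and $[x_i+1,k]$ by the one-coordinate telescoping identity; you simply run the computation from the right-hand side down to $a_{(x_A,x_{-A})}$, whereas the paper expands the left-hand side via the induction hypothesis on $A\setminus i$ and then applies the same telescoping step in reverse before regrouping. Your observations that the identity is unconditional in the coefficients and holds for all $x_A$ under the empty-sum convention, as well as the operator-theoretic aside, are correct but inessential refinements.
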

\begin{proof}[\bf Proof ]
Let $A\subseteq N$. We proceed by recurrence on $|A|$. The relation is obviously true for $|A|=0$. Let us suppose that the relation is true for any set of at most $|A|-1$ elements, and try to show it is also true for any set of $|A|$ elements. We have , for all $x_A \in L_A\setminus \{k\}^A$,
\begin{align*}
a_{(x_A, x_{-A})} &=a_{(x_{A\setminus i}, x_i, x_{-A})}\\
&=\sum_{C\subseteq A\setminus i} (-1)^{a-c-1} \sum_{\substack{l_C=x_C \\ \ell_{A\setminus C \cup i}=(x+1)_{A\setminus C\cup i} }}^{k_{A\setminus i}} a_{(\ell_C, \ell_{A\setminus C\cup i}, x_i, x_{-A})}\\
&=\sum_{C\subseteq A\setminus i} (-1)^{a-c-1} \sum_{\substack{\ell_C=x_C \\ \ell_{A\setminus C\cup i}=(x+1)_{A\setminus C\cup i} }}^{k_{A\setminus i}} \Big(\sum_{\ell_i=x_i}^{k} a_{(\ell_C, \ell_{A\setminus C\cup i}, \ell_i, x_{-A})}-\sum_{\ell_i=x_i+1}^{k} a_{(\ell_C, \ell_{A\setminus C\cup i}, \ell_i, x_{-A})}\Big)\\
&=\sum_{C\subseteq A\setminus i} (-1)^{a-c-1} \Big(\sum_{\substack{\ell_{C\cup i}=x_{C\cup i} \\ \ell_{A\setminus C\cup i}=(x+1)_{A\setminus C\cup i} }}^{k_A}  a_{(\ell_C, \ell_{A\setminus C\cup i}, \ell_i, x_{-A})}-\sum_{\substack{\ell_C=x_C \\ \ell_{A\setminus C}=(x+1)_{A\setminus C} }}^{k_A} a_{(\ell_C, \ell_{A\setminus C\cup i}, \ell_i, x_{-A})}\Big)\\
&=\sum_{C\subseteq A\setminus i} \Big((-1)^{a-c-1} \sum_{\substack{\ell_{C\cup i}=x_{C\cup i} \\ \ell_{A\setminus C\cup i}=(x+1)_{A\setminus C\cup i} }}^{k_A} a_{(\ell_C, \ell_{A\setminus C}, x_{-A})}+(-1)^{a-c} \sum_{\substack{\ell_C=x_C \\ \ell_{A\setminus C}=(x+1)_{A\setminus C} }}^{k_A} a_{(\ell_C, \ell_{A\setminus C}, x_{-A})}\Big)\\
&=\sum_{C\subseteq A} (-1)^{a-c} \sum_{\substack{\ell_C=x_C \\ \ell_{A\setminus C}=(x+1)_{A\setminus C} }}^{k_A} a_{(\ell_C, \ell_{A\setminus C}, x_{-A})}
\end{align*}
\end{proof}
\begin{lemma} \label{Lemma2 LN}
\begin{align*}
\sum_{\substack{x \in L \\ x<k_N}} b_x \sum_{A\subseteq N}(-1)^{n-a}v(x+1_A)
&=\sum_{\substack{A \subseteq N \\ 0_A<x_A<k_A}} \sum_{\substack{B\subseteq N\setminus A \\ C\subseteq A}} (-1)^{a+b-c} b_{(x_A-1_C, 0_B, (k-1)_{N\setminus A\cup B})} v(x_A, 0_B, k_{N\setminus A\cup B})
\end{align*}
\end{lemma}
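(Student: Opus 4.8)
The plan is to read this as a pure re-indexing identity (no monotonicity is used). By definition of the derivative, the left-hand side is $\sum_{x<k_N} b_x\,\Delta_N v(x)=\sum_{x<k_N}\sum_{A\subseteq N}(-1)^{n-a}\,b_x\,v(x+1_A)$, a double sum over pairs $(x,A)$ with $x\in L$, $x<k_N$, $A\subseteq N$. I would interchange the two summations and collect, for each fixed $y\in L$, the total coefficient of $v(y)$; the claim is then exactly that this coefficient equals the inner expression on the right-hand side attached to the canonical decomposition of $y$.

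The first step is to characterise the pairs $(x,A)$ contributing to a fixed $y$, i.e. those with $x<k_N$ and $x+1_A=y$. Writing $x_i=y_i$ for $i\notin A$ and $x_i=y_i-1$ for $i\in A$, the constraint $x\geq 0_N$ forces $A\subseteq S(y)$, while $x<k_N$ forces $y_i<k$ for $i\notin A$, i.e. $K(y)\subseteq A$. Hence the admissible $A$ are exactly those of the form $A=K(y)\cup C$ with $C\subseteq A_0:=\{i\in N:0<y_i<k\}$, and the associated $x$ equals $(y_{A_0\setminus C},(y-1)_C,0_{B_0},(k-1)_{K_0})$ where $B_0:=\{i:y_i=0\}$ and $K_0:=K(y)$. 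This correspondence $C\mapsto(x,A)$ is a bijection from subsets of $A_0$ onto admissible pairs, and one must check that every $y\in L$ is reached, which it is because the outer structure on the right-hand side runs over all $y$ through the partition $N=A_0\cup B_0\cup K_0$ (the index ``$\sum_{A\subseteq N,\ 0_A<x_A<k_A}$'' ranging over $(A_0,y_{A_0})$, the index ``$\sum_{B\subseteq N\setminus A}$'' over $B_0$, and $k$ placed on the remaining kernel).

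Next I would handle the sign. For $A=K_0\cup C$ we have $a=\kappa+c$ with $\kappa:=|K_0|$ and $c:=|C|$, so $(-1)^{n-a}=(-1)^{n-\kappa-c}$; using $n=|A_0|+|B_0|+\kappa$ this equals $(-1)^{|A_0|+|B_0|-c}$, which is precisely the sign $(-1)^{a+b-c}$ in the statement once $A,B,x_A$ there are identified with $A_0,B_0,y_{A_0}$. Combining the three ingredients, the coefficient of $v(y)$ on the left is $\sum_{C\subseteq A_0}(-1)^{|A_0|+|B_0|-c}\,b_{(y_{A_0}-1_C,\,0_{B_0},\,(k-1)_{K_0})}$, matching the right-hand side summand attached to $y$; summing over $y$ yields the identity. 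One should also note that each argument $(y_{A_0}-1_C,0_{B_0},(k-1)_{K_0})$ has every coordinate at most $k-1$, hence lies in the domain $\{x<k_N\}$ on which $b$ is defined.

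I expect the only real obstacle to be the combinatorial bookkeeping: keeping the three index families $A_0,B_0,K_0$ straight, verifying the boundary inclusions $K(y)\subseteq A\subseteq S(y)$ that single out the admissible pairs, and checking the bijection with $C\subseteq A_0$. The degenerate case $y=0_N$ (where only $A=\emptyset$, $x=0_N$ contribute, with coefficient $(-1)^n b_{0_N}$) is a useful sanity check but is already subsumed by the general argument.
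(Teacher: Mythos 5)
Your argument is correct, and it takes a genuinely different route from the paper. The paper proves the lemma by induction on $n$: it peels off one coordinate $i$, applies the induction hypothesis to $N\setminus i$ at each fixed level $x_i$, and then regroups the sum over $x_i$ (separating the boundary cases $x_i=0$ and $x_i=k-1$ from the interior values) so as to absorb $i$ into one of the sets $A$, $B$ or $N\setminus(A\cup B)$ of the target formula --- a fairly heavy bookkeeping step. You instead read the identity as a global change of variables: interchange the sums, fix $y=x+1_A$, observe that the admissible pairs $(x,A)$ are exactly those with $K(y)\subseteq A\subseteq S(y)$, parameterize them by $C\subseteq A_0=\{i:0<y_i<k\}$ via $A=K(y)\cup C$ and $x=(y_{A_0}-1_C,0_{B_0},(k-1)_{K(y)})$, and match the coefficient of $v(y)$ on both sides, using that the outer indices $(A,x_A,B)$ of the right-hand side are in bijection with $y\in L$ through the partition of $N$ into interior, zero and kernel components. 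Your sign computation $(-1)^{n-a}=(-1)^{|A_0|+|B_0|-c}$ and the verification that the $b$-subscripts stay in the domain $\{x<k_N\}$ close the argument. What each approach buys: your direct coefficient comparison is shorter, avoids the error-prone inductive regrouping, and makes the shape of the right-hand side transparent (it simply enumerates the points $x$ from which $y$ is reachable); the paper's induction has the merit of following the same recurrence pattern as its Lemma~\ref{Lemma1 LN}, so the two proofs share their structure. No gap in your proposal.
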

\begin{proof}[\bf Proof ]
We shall proceed by induction on $n$. For simplicity, we denote $N\setminus i$ by $S$, $(x_A, 0_B, k_{S\setminus A\cup B})$ by $x_{A,B}^S$ and $(x_A-1_C, 0_B, (k-1)_{S\setminus A\cup B})$ by $x_{A,C,B}^S$, with $C\subseteq A$. 
The relation is obviously true for $n=1$. Let us suppose that the relation is true for any set of at most $n-1$ elements, and try to show it is also true for any set of $n$ elements. We have
\begin{align*}
&\sum_{\substack{x \in L \\ x<k_N}} b_x \sum_{A\subseteq N}(-1)^{n-a}v(x+1_A)\\
&=\sum_{\substack{x \in L \\ x<k_N}} b_x \sum_{A\subseteq N\setminus i}(-1)^{n-a}\big(v(x+1_A)-v(x+1_{A\cup i})\big)\\
&=\sum_{x_i<k_i} \sum_{\substack{x_{-i} \in L_{-i} \\ x_{-i}<k_{-i}}} b_{x_{-i}, x_i} \sum_{A\subseteq S}(-1)^{s-a}\big(v(x_{-i}+1_A, x_i+1)-v(x_{-i}+1_A, x_i)\big)\\
&=\sum_{\substack{A \subseteq S \\ 0_A<x_A<k_A \\ x_i<k_i}} \sum_{\substack{B\subseteq S\setminus A \\ C\subseteq A}} (-1)^{a+b-c} b_{(x_{A,C,B}^S, x_i)}\big(v(x_{A,B}^S, x_i+1)-v(x_{A,B}^S, x_i)\big)\\
&=\sum_{\substack{A \subseteq S \\ 0_A<x_A<k_A}} \bigg[
\sum_{\substack{B\subseteq S\setminus A \\ C\subseteq A}} \Big((-1)^{a+b-c} b_{(x_{A,C,B}^S, k_i-1)} v(x_{A,B}^S, k_i)+(-1)^{a+b+1-c} b_{(x_{A,C,B}^S, 0_i)} v(x_{A,B}^S, 0_i)\Big)\\
&+\sum_{\substack{B\subseteq S\setminus A \\ C\subseteq A \\ 0<x_i<k_i}} \Big((-1)^{a+1+b-c} b_{(x_{A,C,B}^S, x_i)} v(x_{A,B}^S, x_i)+(-1)^{a+b-c} b_{(x_{A,C,B}^S, x_i-1)} v(x_{A,B}^S, x_i)\Big)\bigg]\\
&=\sum_{\substack{A \subseteq S \\ 0_A<x_A<k_A}} \bigg[\sum_{\substack{B\subseteq S\setminus A \\ C\subseteq A}} \Big((-1)^{a+b-c} b_{(x_{A,C,B}^N)} v(x_{A,B}^N)+(-1)^{a+b+1-c} b_{(x_{A,C,B\cup i}^S)} v(x_{A, B\cup i}^S)\Big)\\
&+\sum_{\substack{B\subseteq S\setminus A \\ C\subseteq A \\ 0<x_i<k_i}}  \Big((-1)^{a+1+b-c} b_{(x_{A\cup i,C,B}^S)} v(x_{A\cup i, B}^S)+(-1)^{a+b-c} b_{(x_{A\cup i,C\cup i, B}^S)} v(x_{A\cup i,B}^S)\Big)\bigg]\\
&=\sum_{\substack{A \subseteq S \\ 0_A<x_A<k_A}} \Big(\sum_{\substack{B\subseteq N\setminus A \\ C\subseteq A}} (-1)^{a+b-c} b_{(x_{A,C,B}^N)} v(x_{A, B}^N)+\sum_{\substack{B\subseteq S\setminus A \\ C\subseteq A\cup i \\ 0<x_i<k_i}} 
(-1)^{a+1+b-c} b_{(x_{A\cup i,C,B}^S)} v(x_{A\cup i,B}^S)\Big)\\
&=\sum_{\substack{A \subseteq N \\ 0_A<x_A<k_A}} \sum_{\substack{B\subseteq N\setminus A \\ C\subseteq A}} (-1)^{a+b-c} b_{(x_A-1_C, 0_B, (k-1)_{N\setminus A\cup B})}v(x_A, 0_B, k_{N\setminus A\cup B})
\end{align*}
which is the desired result.
\end{proof}
We now prove Proposition $\ref{PROP LN}$.
\begin{proof} [\bf Proof ]
It is easy to check that the formula satisfies the axioms. Conversely, we consider $I^v$ satisfying \textbf{(L)} and \textbf{(N)}. Let $v\in\mathcal{G} (L)$, and $T \in 2^N\setminus \{\emptyset\}$.\\ 
By Proposition $\ref{PROP L}$, there exists $a^T_x \in\mathbb{R}$, for all $x\in L$, such that,
$$I^v(T) = \sum_{x\in L} a_x^T v(x).$$
Then,
$$I^v(T) = \sum_{x_{-i}\in L_{-i}} \sum_{x_i \in L_i} a_{(x_{-i}, x_i)}^T v(x_{-i}, x_i).$$
Assume now that $i$ is null criterion for $v$. We have $v(x_{-i}, x_i)=v(x_{-i}, 0_i)$. Hence,
$$I^v(T) = \sum_{x_{-i}\in L_{-i}} \sum_{x_i \in L_i} a_{(x_{-i}, x_i)}^T v(x_{-i}, 0_i).$$
By \textbf{(N)}, we have, for all $i\in T$ null, and for all $x_{-i} \in L_{-i}$,
$$\sum_{x_i \in L_i} a_{(x_{-i}, x_i)}^T=0.$$
$\forall x_{-T} \in L_{-T}$, let $A \subseteq T, B\subseteq T\setminus A$, and set
$$b_{(x_A, 0_B, (k-1)_{T\setminus A\cup B}, x_{-T})}^T = (-1)^b \sum_{\ell_A=(x+1)_A}^{k_A} a_{(\ell_A, 0_B, k_{T\setminus A\cup B}, x_{-T})}^T,  \forall x_A\in L_A\setminus \{0, k\}^A.$$
Then, we have, $\forall x_{-T}\in L_{-T}, \forall A\subseteq T, \forall B\subseteq T\setminus A, \forall x_A\in L_A\setminus \{0, k\}^A,$
%$$a_{(x_A, 0_B, k_{T\setminus A\cup B}, x_{-T})}^T = (-1)^b \sum_{C\subseteq A}(-1)^{a-c}b_{(x_A-1_C, 0_B, (k-1)_{T\setminus A\cup B}, x_{-T})}^T$$
\begin{align*}
a_{(x_A, 0_B, k_{T\setminus A\cup B}, x_{-T})}^T
&= \sum_{C\subseteq A} (-1)^{a-c} \sum_{\substack{\ell_C=x_C \\ \ell_{A\setminus C}=(x+1)_{A\setminus C} }}^{k_A} a_{(\ell_C, \ell_{A\setminus C}, 0_B, k_{T\setminus A\cup B}, x_{-T})}^T \text{ (using Lemma }\ref{Lemma1 LN})\\
&= \sum_{C\subseteq A} (-1)^{a-c} \sum_{\substack{\ell_C=(x_C-1_C)+1_C \\ \ell_{A\setminus C}=x_{A\setminus C}+1_{A\setminus C} }}^{k_A} a_{(\ell_C, \ell_{A\setminus C}, 0_B, k_{T\setminus A\cup B}, x_{-T})}^T \\
%&= \sum_{C\subseteq A} (-1)^{a-c} \sum_{\ell_A=((x_A-1_C)+1)_A}^{k_A} a_{(\ell_A, 0_B, k_{T\setminus A\cup B}, x_{-T})}^T\\
&= (-1)^b\sum_{C\subseteq A} (-1)^{a-c} (-1)^b \sum_{y=((x-1_C)+1)_A}^{k_A} a_{(y, 0_B, k_{T\setminus A\cup B}, x_{-T})}^T\\
&= (-1)^b\sum_{C\subseteq A} (-1)^{a-c} b_{(x_A-1_C, 0_B, (k-1)_{T\setminus A\cup B}, x_{-T})}^T\\
\end{align*}
Therefore, it suffices to replace the values of $a_x^T$ in the formula $(\ref{F L})$, and then the result is established.
\begin{align*}
I^v(T) &= \sum_{x_{-T}\in L_{-T}}\sum_{x_T\in L_T} a_{(x_T, x_{-T})}^T v(x_T, x_{-T})\\
&=\sum_{x_{-T}\in L_{-T}} \sum_{\substack{A \subseteq T \\ 0_A<x_A<k_A}} \sum_{B\subseteq T\setminus A} a_{(x_A, 0_B, k_{T\setminus A\cup B}, x_{-T})}^T v(x_A, 0_B, k_{T\setminus A\cup B}, x_{-T})\\
&=\sum_{x_{-T}\in L_{-T}} \sum_{\substack{A \subseteq T \\ 0_A<x_A<k_A}} \sum_{B\subseteq T\setminus A} (-1)^b \sum_{C\subseteq A}(-1)^{a-c}b_{(x_A-1_C, 0_B, (k-1)_{T\setminus A\cup B}, x_{-T})}^T v(x_A, 0_B, k_{T\setminus A\cup B}, x_{-T})\\
%&=\sum_{x_{-T}\in L_{-T}} \sum_{\substack{A \subseteq T \\ 0_A\leq x_A<k_A}} \sum_{B\subseteq T\setminus A} (-1)^b b_{(x_A, 0_B, (k-1)_{T\setminus A\cup B}, x_{-T})} \big(\sum_{C\subseteq A}(-1)^{a-c}v(x_A+1_C, 0_B, k_{T\setminus A\cup B}, x_{-T})\big) \\
&=\sum_{x_{-T}\in L_{-T}} \sum_{\substack{x_T \in L_T \\ x_T<k_T}} b_x^T \sum_{A\subseteq T}(-1)^{t-a}v(x+1_A) \text{ (using Lemma }\ref{Lemma2 LN})\\
&=\sum_{\substack{x\in L \\ x_T<k_T}} b_x^T \sum_{A\subseteq T}(-1)^{t-a}v(x+1_A).
\end{align*}
\end{proof}
\begin{quote}
\textbf{Invariance axiom (I):} Let us consider two functions $v, w \in\mathcal{G} (L) $ such that, for all $i \in N$,
$$v(x+1_i)-v(x) = w(x)- w(x-1_i), \forall x\in L, x_i\notin \lbrace 0, k\rbrace$$
$$v(x_{-i}, 1_i) - v(x_{-i}, 0_i) = w(x_{-i}, k_i) - w(x_{-i}, k_i-1), \forall x_{-i}\in L_{-i}.$$
Then $I^v(T\cup i) = I^w(T\cup i), \forall T \subseteq N\setminus i$.
\end{quote}

\begin{proposition} \label{PROP LNI}
Under axioms \textbf{(L)}, \textbf{(N)} and \textbf{(I)}, $\forall v\in\mathcal{G} (L), \forall i\in N$,  
$$I^v(T)=\sum_{x_{-T}\in L_{-T}} b_{x_{-T}}^T \sum_{S\subseteq T}(-1)^{t-s} v(0_S, k_{T\setminus S}, x_{-T}).$$
\end{proposition}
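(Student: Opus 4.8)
Starting from Proposition~\ref{PROP LN}, I already have the representation
$$I^v(T)=\sum_{\substack{x\in L\\x_T<k_T}} b_x^T\,\Delta_T v(x)=\sum_{\substack{x\in L\\x_T<k_T}} b_x^T\sum_{A\subseteq T}(-1)^{t-a}v(x+1_A),$$
so the only thing left to extract is the consequence of the invariance axiom \textbf{(I)}. The strategy is to fix $i\in T$, apply \textbf{(I)} relative to coordinate $i$ to pairs of games that are ``null-complemented'' away from coordinate $i$, and deduce linear constraints on the coefficients $b_x^T$ forcing them to depend on $x_i$ only through whether the sum $\Delta_T v$ is evaluated at the extreme value. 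Concretely, I would first rewrite $\Delta_T v(x)=\Delta_i\big(\Delta_{T\setminus i}v\big)(x)$ and note that \textbf{(I)} equates the forward difference of $v$ at an interior level $x_i$ with the backward difference of $w$ at level $k-$something; iterating the telescoping this means that, as far as the $i$-th coordinate is concerned, $I^v(T)$ may only see $\Delta_{T\setminus i}v$ at $x_i=0$ versus $x_i=k-1$ (the two ``boundary'' slices). This is exactly the mechanism already used in Theorem~\ref{THEO Imp LNISE}(ii) for the importance index, and I would mirror that argument one coordinate at a time.

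\textbf{Key steps in order.} (1) Show that \textbf{(I)} implies, for each $i\in T$ and each fixed $x_{-i}$, a relation of the form $\sum_{x_i} \gamma_{x_i}\,[\text{coefficient data}]=0$ that collapses the inner sum over $x_i\in\{0,\dots,k-1\}$ into the single term $v(x_{-i},k_i)-v(x_{-i},0_i)$; this is the $k$-ary analogue of passing from \eqref{Imp LN} to \eqref{Imp LNI}. (2) Do this successively for every $i\in T$: after treating all coordinates in $T$, each of the $t$ active coordinates contributes only its two boundary values $0$ and $k$, so $\Delta_T v(x)$ gets replaced by $\sum_{S\subseteq T}(-1)^{t-s}v(0_S,k_{T\setminus S},x_{-T})$ and the free variable is reduced to $x_{-T}\in L_{-T}$. (3) Collect the surviving coefficients into a single family $b^T_{x_{-T}}$, obtaining
$$I^v(T)=\sum_{x_{-T}\in L_{-T}} b^T_{x_{-T}}\sum_{S\subseteq T}(-1)^{t-s}v(0_S,k_{T\setminus S},x_{-T}).$$
(4) Check the converse direction, i.e.\ that this formula does satisfy \textbf{(L)}, \textbf{(N)} and \textbf{(I)}; linearity and nullity are immediate from the shape of the expression, and \textbf{(I)} holds because the bracketed term only uses the extreme values of $v$ in each coordinate of $T$, so the interior-vs-boundary difference identities in \textbf{(I)} leave it unchanged.

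\textbf{Main obstacle.} The delicate point is step~(2): \textbf{(I)} is stated coordinatewise, but once I have modified the representation in coordinate $i$, the coefficients indexed by the remaining coordinates of $T$ change, and I must be careful that applying \textbf{(I)} to a second coordinate $j\in T\setminus i$ is still legitimate (the hypotheses of \textbf{(I)} must be checkable on suitable test games, e.g.\ built from the $\delta_x$ basis or from games that are null in all coordinates except $i,j$). Keeping the bookkeeping of which coefficient absorbs which sign $(-1)^{t-s}$, and verifying that the order in which the coordinates of $T$ are eliminated does not matter (symmetry of the construction), is where the real care is needed; the underlying combinatorial identity is essentially the one already packaged in Lemma~\ref{Lemma1 LN}, reused $|T|$ times. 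Everything else is routine telescoping analogous to the importance-index case.
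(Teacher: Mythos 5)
Your plan follows essentially the same route as the paper's proof: starting from Proposition~\ref{PROP LN}, axiom \textbf{(I)} applied to each coordinate $i\in T$ yields the coefficient identities $b^T_{x_i,x_{-i}}=b^T_{x_i+1,x_{-i}}$ (for $x_i\notin\{k-1,k\}$), hence $b^T_x$ is independent of $x_T$, and then a single telescoping of $\sum_{x_T<k_T}\Delta_T v(x)$ produces the alternating sum $\sum_{S\subseteq T}(-1)^{t-s}v(0_S,k_{T\setminus S},x_{-T})$, exactly as in the paper. The ``main obstacle'' you raise is not an issue in this formulation: the constraints for the different $i\in T$ are all derived simultaneously on the same family of coefficients from Proposition~\ref{PROP LN} (by comparing $I^v(T)$ and $I^w(T)$ for pairs $(v,w)$ linked by \textbf{(I)}), rather than by sequentially rewriting the representation, so no order-of-elimination bookkeeping is needed.
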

\begin{proof}[\bf Proof ]
It is easy to check that the above formula satisfies the axioms.  Conversely, we consider $I^v$ satisfying \textbf{(L)}, \textbf{(N)} and \textbf{(I)}. Let $v, w \in\mathcal{G} (L)$, and $T \subseteq N$. By Proposition \ref{PROP LN} and the axiom \textbf{(I)}, we have, for any $i\in T$
\begin{align*}
I^v(T)&=\sum_{\substack{x\in L\\x_T < k_T}} b_x^T \Delta_T v(x)\\
&=\sum_{\substack{x_{-i}\in L_{-i}\\x_{T\setminus i} < k_{T\setminus i}}} \Big(b_{(0_i,x_{-i})}^T \Delta_{T\setminus i} \Delta_i v(0_i, x_{-i})+\sum_{\substack{x_i\in L_i \\ x_i \notin \{0, k\}}}b_x^T \Delta_{T\setminus i} \Delta_i v(x)\Big)\\
&=\sum_{\substack{x_{-i}\in L_{-i}\\x_{T\setminus i} < k_{T\setminus i}}} \Big(b_{(0_i,x_{-i})}^T \Delta_{T\setminus i} \Delta_i w\big((k-1)_i,x_{-i}\big)+\sum_{\substack{x_i\in L_i \\ x_i \notin \{0, k\}}}b_x^T \Delta_{T\setminus i} \Delta_i w(x-1_i)\Big)\\
&=\sum_{\substack{x_{-i}\in L_{-i}\\x_{T\setminus i} < k_{T\setminus i}}} \Big(b_{(0_i,x_{-i})}^T \Delta_{T\setminus i} \Delta_i w\big((k-1)_i, x_{-i}\big)+\sum_{\substack{x_i\in L_i \\ x_i <k-1}}b_{x_i+1_i,x_{-i}}^T \Delta_{T\setminus i} \Delta_i w(x)\Big),
\end{align*}
and,
\begin{align*}
I^w(T)&=
\sum_{\substack{x_{-i}\in L_{-i}\\x_{T\setminus i} < k_{T\setminus i}}} \Big(b_{\big((k-1)_i, x_{-i} \big)}^T \Delta_{T\setminus i} \Delta_i w\big((k-1)_i , x_{-i} \big)+\sum_{\substack{x_i\in L_i \\ x_i < k-1}}b_x^T \Delta_{T\setminus i} \Delta_i w(x)\Big),
\end{align*}
then, $b_{x_i, x_{-i}}^T = b_{x_i+1_i, x_{-i}}^T, \forall x_{-i} \in L_{-i}, \forall x_i \in L_i \setminus \{k, k-1\}$ and any $i\in T$. Hence, $b_{x_T, x_{-T}}^T = b_{(x+1)_T, x_{-T}}^T$, for all $x_{-T}\in L_{-T}$ and for all $x_T\in L_T$ such that $x_T<k_T$. 

We conclude that the coefficient $b_{x_T, x_{-T}}^T$ does not depend on $x_T$. 

We set thus $b_{x_{-T}}^T := b_{x_T, x_{-T}}^T$. Hence, for any $v\in\mathcal{G}(L)$, and for any $T \subseteq N$, we have,  
\begin{align*}
I^v(T) &= \sum_{\substack{x\in L\\x_T < k_T}} b_x^T \Delta_T v(x)\\
&= \sum_{x_{-T}\in L_{-T}} b_{x_{-T}}^T \sum_{\substack{x_T \in L_T \\ x_T < k_T}} \Delta_T v(x)\\
&= \sum_{x_{-T}\in L_{-T}} b_{x_{-T}}^T \sum_{S\subseteq T}(-1)^{t-s} v(0_{T\setminus S}, k_S, x_{-T})
\end{align*}
\end{proof}

\bigskip
We introduce the Symmetry axiom. 
\begin{quote}
\textbf{Symmetry axiom (S):} For all $v\in\mathcal{G}(L)$, for all permutation $\sigma$ on $N$,
$$I^{\sigma \circ v}(\sigma (T)) = I^v(T), \forall T\subseteq N.$$  
\end{quote}
\begin{proposition} \label{PROP LNS}
Under axioms \textbf{(L)}, \textbf{(N)}, \textbf{(S)}, $\forall v \in \mathcal{G}(L), \forall T\subseteq N$,  
\begin{align}
\label{L}
I^v(T)=\sum_{\substack {x\in L \\ x_T <k_T}} b_{x_T; n_0, n_1, \ldots, n_k} \Delta_T v(x),
\end{align}
where $b_{x_T; n_0, n_1, \ldots, n_k}\in\mathbb{R}$, and $n_j=|\{\ell\in N\setminus T, x_\ell = j\}|$
\end{proposition}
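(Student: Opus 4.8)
The plan is to start from the representation already available under \textbf{(L)} and \textbf{(N)}, namely $I^v(T)=\sum_{x\in L,\ x_T<k_T} b_x^T\,\Delta_T v(x)$ given by Proposition~\ref{PROP LN}, and to show that \textbf{(S)} forces the coefficient $b_x^T$ to depend on $x$ only through $x_T$ and through the multiset of the values of $x$ outside $T$, i.e.\ through the integers $n_0,\dots,n_k$. The first ingredient is to record how \textbf{(S)} acts on the atoms of $\mathcal G(L)$: since $(\sigma\circ\delta_x)(y)=\delta_x(\sigma^{-1}(y))=\delta_{\sigma(x)}(y)$, applying \textbf{(S)} to $v=\delta_x$ and using $a_x^T=I^{\delta_x}(T)$ from Proposition~\ref{PROP L} gives $a_{\sigma(x)}^{\sigma(T)}=a_x^T$ for every permutation $\sigma$. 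Equivalently, at the level of a general $v$, a short computation based on $\sigma(x)+1_A=\sigma(x+1_{\sigma^{-1}(A)})$ and $(\sigma\circ v)(\sigma(z))=v(z)$ yields $\Delta_{\sigma(T)}(\sigma\circ v)(\sigma(x))=\Delta_T v(x)$, together with $\sigma(x)_{\sigma(T)}<k_{\sigma(T)}\iff x_T<k_T$; substituting the formula of Proposition~\ref{PROP LN} for $I^{\sigma\circ v}(\sigma(T))$ and changing variables $x\mapsto\sigma(x)$ then gives $I^{\sigma\circ v}(\sigma(T))=\sum_{x\in L,\ x_T<k_T} b_{\sigma(x)}^{\sigma(T)}\,\Delta_T v(x)$.

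Next I would combine this with \textbf{(S)}: equating $I^{\sigma\circ v}(\sigma(T))=I^v(T)$ for all $v$ gives $\sum_{x\in L,\ x_T<k_T}\big(b_{\sigma(x)}^{\sigma(T)}-b_x^T\big)\Delta_T v(x)=0$ identically in $v$. To pass from this to $b_{\sigma(x)}^{\sigma(T)}=b_x^T$ is the delicate point. One clean way is to observe that the coefficients $b_x^T$ of Proposition~\ref{PROP LN} are uniquely determined, because the functionals $v\mapsto\Delta_T v(x)$, $x_T<k_T$, are linearly independent on $\mathcal G(L)$ --- this follows from a triangular ``peeling'' argument, ordering the points $x$ by decreasing value of $x_T$ and noting that $\Delta_T v((k-1)_T,x_{-T})$ isolates $v(k_T,x_{-T})$, etc. Alternatively, and bypassing the independence issue altogether, one can note that the explicit construction of $b_x^T$ from $a_x^T$ in the proof of Proposition~\ref{PROP LN} (via Lemma~\ref{Lemma1 LN}) commutes with relabelling by $\sigma$, so $a_{\sigma(x)}^{\sigma(T)}=a_x^T$ immediately propagates to $b_{\sigma(x)}^{\sigma(T)}=b_x^T$. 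Either route gives the key identity $b_{\sigma(x)}^{\sigma(T)}=b_x^T$ for all permutations $\sigma$ and all $x$ with $x_T<k_T$; this is the main obstacle, the rest being bookkeeping.

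Finally I would specialize to permutations $\sigma$ that stabilize $T$, i.e.\ $\sigma(T)=T$. Such $\sigma$ run over all pairs consisting of an arbitrary permutation of $T$ and an arbitrary permutation of $N\setminus T$, and the identity becomes $b_{\sigma(x)}^T=b_x^T$. Hence $b_x^T$ depends on $x$ only through the multiset $\{x_i:i\in T\}$ --- in particular only through the vector $x_T$ --- and through the multiset $\{x_\ell:\ell\in N\setminus T\}$, i.e.\ through $(n_0,\dots,n_k)$ with $n_j=|\{\ell\in N\setminus T:x_\ell=j\}|$. Writing $b_x^T=:b_{x_T;\,n_0,n_1,\dots,n_k}$ and substituting into the formula of Proposition~\ref{PROP LN} yields $(\ref{L})$. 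The converse direction, that any such formula satisfies \textbf{(L)}, \textbf{(N)} and \textbf{(S)}, is routine: linearity and nullity are inherited from Proposition~\ref{PROP LN}, and symmetry follows because the coefficient depends on $x$ only through data invariant under the relevant relabellings.
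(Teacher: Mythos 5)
Your proof is correct and takes essentially the same route as the paper's: apply Proposition~\ref{PROP LN} to both $I^v(T)$ and $I^{\sigma\circ v}(\sigma(T))$, change variables via $\sigma$, deduce $b^{\sigma(T)}_{\sigma(x)}=b^T_x$ from \textbf{(S)}, and specialize to permutations stabilizing $T$ to get the dependence only on $x_T$ and $n(x_{-T})$. The only difference is that you explicitly justify the identification of coefficients (uniqueness of the $b^T_x$, via linear independence of the functionals $v\mapsto\Delta_T v(x)$, or equivariance of their construction), a step the paper's proof leaves implicit.
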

\begin{proof}[\bf Proof ]
Let $v\in\mathcal{G} (L)$ and let $\sigma$ be a permutation on $N$. For every $x\in L$, we put $y=\sigma^{-1}(x)$. From Proposition $\ref{PROP LN}$, we have $\forall T \subseteq N$
\begin{align*}
I^v(T)=\sum_{\substack{y\in L\\y_T < k_T}} b_y^T \Delta_T v(y),
\end{align*}
and,
\begin{align*}
I^{\sigma \circ v}(\sigma(T))
&=\sum_{\substack{x\in L\\x_{\sigma(T)} < k_{\sigma(T)}}} b_x^{\sigma(T)} \Delta_{\sigma(T)} \sigma \circ v(x)\\
&=\sum_{\substack{y\in L\\y_T < k_T}} b_{\sigma(y)}^{\sigma(T)} \Delta_T v(y).
\end{align*}
Then, from the symmetry axiom, we have for all $y\in L$ such that $y_T<k_T$: $b_{\sigma(y)}^{\sigma(T)} = b_y^T$.

For every $y \in L$ such that $y_T<k$, we can write,
$$b_{(y_T; y_{-T})}^T=b_y^T=b_{\sigma(y)}^{\sigma(T)}=b_{(\sigma(y)_{\sigma(T)}; \sigma(y)_{-\sigma(T)})}^{\sigma(T)} = b_{(y_T; \sigma(y)_{-\sigma(T)})}^{\sigma(T)}$$\\
Assuming that $\sigma(T)=T$, then,
$$b_{(y_T; y_{-T})}^T = b_{(y_T; \sigma(y)_{-\sigma(T)})}^{T}$$
For a fixed $T$, $b_{(y_T; \sigma(y)_{-\sigma(T)})}^{T}$ depends only on $n(y_{-T})$, with $n(y_{-T})=\lbrace n_0(y_{-T}), n_1(y_{-T}),\ldots, n_k(y_{-T}) \rbrace$, and $n_j(y_{-T})=|\lbrace \ell\in N\setminus T \vert y_\ell=j\rbrace|.$
$$p_{y_T; y_{-T}}^T = p_{y_T; n(y_{-T})}^{T}$$
Suppose now that $\sigma(T)=S$ (with $S \neq T$), and $\sigma(\ell)=\ell, \forall \ell\in N\setminus S\cup T$, then,  
\begin{align*}
b_{(y_T; n(y_{-T}))}^T &= b_{(y_T; n(\sigma(y)_{-\sigma(T)}))}^{\sigma(T)}\\
 &= b_{(y_T; n(y_{-T}))}^{\sigma(T)}
\end{align*}
we can conclude that the value $b_{y_T; n(y_{-T})}^{T}$ does not depend on the exponent $T$. We denote by $b_{y_T; n(y_{-T})}$ this value.
\end{proof}

\begin{proposition} \label{PROP LNIS}
Under axioms \textbf{(L)}, \textbf{(N)}, \textbf{(I)} and \textbf{(S)}, for any $v \in \mathcal{G}(L)$, $\forall T\subseteq N$,  
\begin{align}
\label{LNIS}
I^v(T)=\sum_{x_{-T}\in L_{-T}} b_{n(x_{-T})} \sum_{S\subseteq T}(-1)^{t-s} v(0_{T\setminus S}, k_S, x_{-T}),
\end{align}
where $b_{n(x_{-T})}\in\mathbb{R}$, $n(x_{-T})=(n_0, n_1, \ldots, n_k)$ and $n_j=|\{\ell\in N\setminus T, x_\ell = j\}|$
\end{proposition}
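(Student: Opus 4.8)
The plan is to combine Proposition~\ref{PROP LNI} (which gives the form of $I^v(T)$ under (L), (N), (I), with coefficients $b^T_{x_{-T}}$ depending only on $x_{-T}$) with Proposition~\ref{PROP LNS} (which, under (L), (N), (S), shows the coefficient $b_{x_T;n(x_{-T})}$ depends on $x_{-T}$ only through the integer tuple $n(x_{-T})=(n_0,\dots,n_k)$ and does not depend on $T$). Since all four axioms are now assumed, I would first apply Proposition~\ref{PROP LNI} to write
\[
I^v(T)=\sum_{x_{-T}\in L_{-T}} b^T_{x_{-T}}\sum_{S\subseteq T}(-1)^{t-s}v(0_{T\setminus S},k_S,x_{-T}),
\]
and then show that (S) forces $b^T_{x_{-T}}$ to depend only on $n(x_{-T})$ and to be independent of $T$.

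The key step is the symmetry argument, run exactly as in the proof of Proposition~\ref{PROP LNS} but now on the reduced coefficients $b^T_{x_{-T}}$ rather than on $b^T_x$. Concretely, let $\sigma$ be a permutation of $N$; substituting $y=\sigma^{-1}(x)$ and using $\Delta_{\sigma(T)}(\sigma\circ v)(x)=\Delta_T v(y)$, one gets from the formula of Proposition~\ref{PROP LNI} that
\[
I^{\sigma\circ v}(\sigma(T))=\sum_{y_{-T}\in L_{-T}} b^{\sigma(T)}_{\sigma(y)_{-\sigma(T)}}\sum_{S\subseteq T}(-1)^{t-s}v(0_{T\setminus S},k_S,y_{-T}),
\]
so the symmetry axiom $I^{\sigma\circ v}(\sigma(T))=I^v(T)$ yields $b^{\sigma(T)}_{\sigma(y)_{-\sigma(T)}}=b^T_{y_{-T}}$ for all admissible $y$. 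Taking $\sigma$ fixing $T$ setwise and permuting $N\setminus T$ arbitrarily shows $b^T_{y_{-T}}$ depends on $y_{-T}$ only through $n(y_{-T})$; taking $\sigma$ that maps $T$ to another subset $S$ of the same size (extended by the identity outside $S\cup T$) shows the value is independent of the superscript. One subtlety worth spelling out: the substitution $y=\sigma^{-1}(x)$ must be checked to respect the summation constraint, i.e. $x_{\sigma(T)}$ ranges over $L_{\sigma(T)}$ iff $y_{-T}$ ranges over $L_{-T}$, which is immediate since $\sigma$ is a bijection and the constraint in the Proposition~\ref{PROP LNI} formula is only on the $L_{-T}$-coordinates.

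Having established $b^T_{x_{-T}}=b_{n(x_{-T})}$, one substitutes back to obtain the claimed formula
\[
I^v(T)=\sum_{x_{-T}\in L_{-T}} b_{n(x_{-T})}\sum_{S\subseteq T}(-1)^{t-s}v(0_{T\setminus S},k_S,x_{-T}).
\]
For the converse direction, one checks directly that any index of this form satisfies (L) (it is visibly linear in $v$), (N) (if $i\in T$ is null, each inner alternating sum vanishes by the Remark following Proposition~\ref{PROP L}, since $\sum_{S\subseteq T}(-1)^{t-s}v(0_{T\setminus S},k_S,x_{-T})$ is a derivative-type expression $\Delta_T$ evaluated suitably and a null coordinate kills it), (I) (the inner sum is built from $\Delta_T v$ on arguments with $x_T\in\{0,k\}^T$, and the coefficient does not see $x_T$, so the invariance relations on coordinate $i\in T$ match terms between $v$ and $w$ exactly as in Proposition~\ref{PROP LNI}), and (S) (the coefficient depends only on the multiplicity vector $n(x_{-T})$, which is invariant under relabeling). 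The main obstacle is bookkeeping in the symmetry step — keeping track of which coordinates of $y$ are renamed and verifying that the decomposition of an arbitrary permutation into one fixing $T$ and one moving $T$ to an equinumerous $S$ indeed generates enough relations — but this is essentially a transcription of the argument already carried out for Proposition~\ref{PROP LNS}, so no genuinely new difficulty arises.
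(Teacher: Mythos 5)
Your proof is correct and is essentially the argument the paper leaves implicit: the paper states Proposition~\ref{PROP LNIS} without proof as the immediate combination of Propositions~\ref{PROP LNI} and~\ref{PROP LNS}, and you carry out exactly that combination, starting from the form in Proposition~\ref{PROP LNI} and re-running the symmetry argument of Proposition~\ref{PROP LNS} on the reduced coefficients $b^T_{x_{-T}}$ (first with $\sigma$ fixing $T$ setwise, then with $\sigma$ exchanging $T$ with an equinumerous $S$), plus the routine verification that the resulting formula satisfies the four axioms. The only point glossed over --- identifying coefficients from the equality of the two sums for all $v$, which needs linear independence of the functionals $v\mapsto\sum_{S\subseteq T}(-1)^{t-s}v(0_{T\setminus S},k_S,x_{-T})$ --- is treated at the same level of detail as in the paper's own proofs, so no genuine gap.
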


\begin{quote}
\textbf{Efficiency axiom (E):} For all $v\in\cG(L)$, 
\[
\sum_{i \in N} I^v (i)=\sum_{\substack{x\in L\\x_j<k}}\big(v(x+1_N)-v(x)\big) .
\]
\end{quote}

We introduce now the Recursivity axiom which is the exact counterpart of the one for classical games in \citep{grro99}. For this, we introduce the following definitions:

Let $v$ be a multichoice game in $\mathcal{G}(L)$ and $S \subseteq N$. 
We introduce the restricted multichoice game $v^{-S}$ of $v$, which is defined on $N\setminus S$ as follows
$$v^{-S}(x_{-S}) = v(x_{-S},0_S), \forall x_{-S}\in L_{-S}.$$
The restriction of $v$ to $i\in N$ in the presence of $i$ denoted by $v_{i}^{-i}$  is the multichoice game on $L_{-i}$ defined by
$$v_{i}^{-i}(x_{-i}) = v(x_{-i}, k_i)-v(0_{-i}, k_i), \forall x_{-i}\in L_{-i}.$$
\begin{quote}
\textbf{Recursivity axiom (R):} For any $v\in\mathcal{G}(L)$,
$$I^v (T) = I^{v_{i}^{-i}}(T\setminus i) - I^{v^{-i}}(T\setminus i), \forall T\subseteq N\setminus \{\emptyset\}, \forall i\in T.$$  
\end{quote}
\begin{lemma}
\label{Lemma R}
Under axioms \textbf{(L)}, \textbf{(N)}, \textbf{(I)} \textbf{(S)} and \textbf{(R)}, for any $v \in \mathcal{G}(L)$, $\forall T\subseteq N \setminus \{\emptyset\}$,
\begin{align}
\label{Formule R}
I^v(T)=\sum_{\substack{A\subseteq T \\ A\neq \emptyset}} (-1)^{t-a} I^{v^{(-T)\cup [A]}_{[A]}}([A]),
\end{align}
with $v^{(-T)\cup [A]}_{[A]}$ is the reduced multichoice game of $v$ to $T$ with respect to $A$ defined on the set $\{0,\ldots, k\}^{(N\setminus T)\cup [A]}$ as follows:
%Let $T \subseteq S$, and $\ell \in \{0, \ldots, k \}$. The reduced multichoice game of $v$ to $S$ with respect to $T$ is a multichoice game denoted by $v_{[T]}^{(-S)\cup [T]}$ defined on the set $\{0,\ldots, k\}^{(N\setminus S)\cup [T]}$. It is defined as follows:
$$v_{[A]}^{(-T) \cup [A]}(x_{-T}, \ell_{[A]}) = v(x_{-T}, \ell_A, 0_{T\setminus A}), \ell \in \{0, \ldots, k \}.$$
\end{lemma}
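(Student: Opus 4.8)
The plan is to prove \eqref{Formule R} by induction on $t=|T|$, using the recursivity axiom \textbf{(R)} to strip off one element of $T$ and Proposition~\ref{PROP LNIS} to control the singleton interaction indices that arise. For the base case $t=1$, say $T=\{i\}$, the right-hand side of \eqref{Formule R} is the single term $I^{v^{(-i)\cup[i]}_{[i]}}([i])$; since $v^{(-i)\cup[i]}_{[i]}(x_{-i},\ell_{[i]})=v(x_{-i},\ell_i)$ is just $v$ with attribute $i$ renamed $[i]$, the symmetry axiom \textbf{(S)} gives $I^{v^{(-i)\cup[i]}_{[i]}}([i])=I^v(\{i\})$, as wanted.

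For the inductive step, assuming \eqref{Formule R} for all nonempty subsets of size $<t$, I would fix $i\in T$, set $S:=T\setminus i$ (so $1\le|S|=t-1$), apply \textbf{(R)} to get $I^v(T)=I^{v_i^{-i}}(S)-I^{v^{-i}}(S)$, and then invoke the induction hypothesis for the two multichoice games $v_i^{-i}$ and $v^{-i}$ on $N\setminus i$ (both vanish at $0$; the complement of $S$ inside $N\setminus i$ is exactly $N\setminus T$). This rewrites $I^v(T)$ as a signed sum over nonempty $B\subseteq S$ of the differences $I^{(v_i^{-i})^{\ast}_{[B]}}([B])-I^{(v^{-i})^{\ast}_{[B]}}([B])$, where $(\cdot)^{\ast}_{[B]}$ denotes reduction relative to $S$ inside $N\setminus i$.

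The next step is to identify these reduced games. A direct unravelling of the definitions shows $(v^{-i})^{\ast}_{[B]}=v^{(-T)\cup[B]}_{[B]}$ (the block $0_i$ produced by the restriction merges into the $0_{S\setminus B}$ block). The $v_i^{-i}$-branch is the crux: $(v_i^{-i})^{\ast}_{[B]}$ freezes attribute $i$ at level $k$, whereas $v^{(-T)\cup[B\cup i]}_{[B\cup i]}$ ties the level of $i$ to that of the macro-attribute $[B\cup i]$, so these are genuinely different games. But Proposition~\ref{PROP LNIS} tells us that the interaction index of a \emph{singleton} depends on the game only through its values at the top and bottom of that singleton, and evaluating these on both sides should give $I^{(v_i^{-i})^{\ast}_{[B]}}([B])=I^{v^{(-T)\cup[B\cup i]}_{[B\cup i]}}([B\cup i])-I^{v^{(-T)\cup[i]}_{[i]}}([i])$, with a correction term that does not depend on $B$.

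Substituting this and splitting the outer sum, reindexing by $A=B\cup i$ recovers $\sum_{i\in A\subseteq T,\,a\ge2}(-1)^{t-a}I^{v^{(-T)\cup[A]}_{[A]}}([A])$, the term $-I^{v^{(-T)\cup[B]}_{[B]}}([B])$ recovers $\sum_{i\notin A\subseteq T,\,A\ne\emptyset}(-1)^{t-a}I^{v^{(-T)\cup[A]}_{[A]}}([A])$, and the $B$-independent correction, via the elementary identity $\sum_{\emptyset\ne B\subseteq S}(-1)^{(t-1)-b}=(-1)^t$ (valid since $t-1\ge1$), collapses to precisely the missing singleton term $(-1)^{t-1}I^{v^{(-T)\cup[i]}_{[i]}}([i])$, i.e.\ the case $A=\{i\}$. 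Adding the three pieces yields $\sum_{\emptyset\ne A\subseteq T}(-1)^{t-a}I^{v^{(-T)\cup[A]}_{[A]}}([A])$, completing the induction. I expect the main obstacle to be precisely this reconciliation: the reduced games produced by the $v_i^{-i}$ branch are not literally the reduced games of the statement, and bridging the gap requires both the invariance structure of Proposition~\ref{PROP LNIS} (i.e.\ axiom \textbf{(I)}) and the combinatorial cancellation that makes the leftover $B$-independent terms reassemble into the $A=\{i\}$ contribution; the remaining manipulations — expanding \textbf{(R)}, applying the induction hypothesis, unravelling the reduced-game definitions, and bookkeeping the sum over $A\subseteq T$ — are routine.
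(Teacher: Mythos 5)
Your proposal is correct and follows essentially the same route as the paper's proof: induction on $|T|$ via \textbf{(R)}, applying the induction hypothesis to $v^{-i}$ and $v_i^{-i}$, and then using Proposition~\ref{PROP LNIS} for singletons to turn the index of the ``$i$ frozen at $k$'' reduced game into $I^{v^{(-T)\cup [A\cup i]}_{[A\cup i]}}([A\cup i])-I^{v^{(-T)\cup [i]}_{[i]}}([i])$, exactly the paper's key identity. The final reindexing and the collapse of the $B$-independent term via $\sum_{\emptyset\neq B\subseteq T\setminus i}(-1)^{t-1-b}=(-1)^{t}$ into the $A=\{i\}$ contribution also match the paper's computation.
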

\begin{proof} [\bf Proof ]
We suppose that the axioms \textbf{(L)}, \textbf{(N)}, \textbf{(I)}, \textbf{(S)} and \textbf{(R)} are satisfied. We proceed by induction on $|T|$. The formula is true for $|T|=1$. Let us assume it is true up to $|T|=t-1$, and try to prove it for $t$ elements. 
By induction assumption we have, for any $v\in\mathcal{G}(L)$, and $i\in T$,
\begin{align*}
I^{v^{-i}}(T\setminus i)=\sum_{\substack{A\subseteq T\setminus i \\ A\neq \emptyset}} (-1)^{t-a-1} I^{v^{(-T)\cup [A]}_{[A]}}([A]),
\end{align*}
\begin{align*}
I^{v_{i}^{-i}}(T\setminus i)=\sum_{\substack{A\subseteq T\setminus i \\ A\neq \emptyset}} (-1)^{t-a-1} I^{v^{(-T)\cup [A]}_{[A], i}}([A]).
\end{align*}
with $v_{[A], i}^{(-T)\cup [A]}(x_{-T}, \ell_{[A]}) = v(x_{-T}, \ell_A, k_i, 0_{T\setminus A\cup i})-v(0_{-i}, k_i), \ell \in \{0, \ldots, k \}$. 

\bigskip
Let $A \subseteq T\setminus i$ such that $A\neq \emptyset$. From Proposition \ref{PROP LNIS},
\begin{align*}
I^{v^{(-T)\cup [A]}_{[A], i}}([A])
&=\sum_{x_{-T}\in L_{-T}} b_{n(x_{-T})} \Big(v^{(-T)\cup [A]}_{[A], i}(x_{-T}, k_{[A]})-v^{(-T)\cup [A]}_{[A], i}(x_{-T}, 0_{[A]})\Big)\\
&=\sum_{x_{-T}\in L_{-T}} b_{n(x_{-T})} \Big(v(x_{-T}, k_{A\cup i}, 0_{T\setminus A\cup i})-v(x_{-T}, k_i, 0_{T\setminus i})\Big) \\
&= \sum_{x_{-T}\in L_{-T}} b_{n(x_{-T})} \Big(v(x_{-T}, k_{A\cup i}, 0_{T\setminus A\cup i})-v(x_{-T}, 0_T)\Big)\\
&-\sum_{x_{-T}\in L_{-T}} b_{n(x_{-T})} \Big(v(x_{-T}, k_i, 0_{T\setminus i})-v(x_{-T}, 0_T)\Big)\\
&= \sum_{x_{-T}\in L_{-T}} b_{n(x_{-T})} \Big(v^{(-T)\cup [A\cup i]}_{[A\cup i]}(x_{-T}, k_{[A\cup i]})-v^{(-T)\cup [A\cup i]}_{[A\cup i]}(x_{-T}, 0_{[A\cup i]})\Big)\\
&-\sum_{x_{-T}\in L_{-T}} b_{n(x_{-T})} \Big(v^{(-T)\cup i}(x_{-T}, k_i)-v^{(-T)\cup i}(x_{-T}, 0_i)\Big)\\
&=I^{v^{(-T)\cup [A\cup i]}}_{[A\cup i]}([A\cup i])-I^{v^{(-T)\cup i}}(i).
\end{align*}
By \textbf{(R)}, we have
\begin{align*}
I^v (T) &= I^{v_{i}^{-i}}(T\setminus i) - I^{v^{-i}}(T\setminus i)\\
&=\sum_{\substack{A\subseteq T\setminus i \\ A\neq \emptyset}} (-1)^{t-a-1} I^{v^{(-T)\cup [A]}_{[A], i}}([A])-\sum_{\substack{A\subseteq T\setminus i \\ A\neq \emptyset}} (-1)^{t-a-1} I^{v^{(-T)\cup [A]}_{[A]}}([A])\\
&=\sum_{\substack{A\subseteq T\setminus i \\ A\neq \emptyset}} (-1)^{t-a-1} \Big(I^{v^{(-T)\cup [A\cup i]}}_{[A\cup i]}([A\cup i])-I^{v^{(-T)\cup i}}(i)\Big)-\sum_{\substack{A\subseteq T\setminus i \\ A\neq \emptyset}} (-1)^{t-a-1} I^{v^{(-T)\cup [A]}_{[A]}}([A])\\
&=\sum_{\substack{A\subseteq T\setminus i \\ A\neq \emptyset}} (-1)^{t-a-1} \Big(I^{v^{(-T)\cup [A\cup i]}}_{[A\cup i]}([A\cup i])-I^{v^{(-T)\cup [A]}_{[A]}}([A])\Big)-I^{v^{(-T)\cup i}}(i) \sum_{\substack{A\subseteq T\setminus i \\ A\neq \emptyset}} (-1)^{t-a-1} \\
%&=\sum_{\substack{A\subseteq T\setminus i \\ A\neq \emptyset}} (-1)^{t-a-1} \Big(I^{v^{(-T)\cup [A\cup i]}}_{[A\cup i]}([A\cup i])-I^{v^{(-T)\cup [A]}_{[A]}}([A])\Big)+(-1)^{t-1}I^{v^{(-T)\cup i}}(i) \\
&=\sum_{\substack{A\subseteq T\setminus i \\ A\neq \emptyset}} (-1)^{t-a-1} \Big(I^{v^{(-T)\cup [A\cup i]}}_{[A\cup i]}([A\cup i])-I^{v^{(-T)\cup [A]}_{[A]}}([A])\Big)+(-1)^{t-1}I^{v^{(-T)\cup [i]}_{[i]}}([i]) \\
%&=\sum_{\substack{A\subseteq T\setminus i \\ A\neq \emptyset}} (-1)^{t-a-1} I^{v^{(-T)\cup [A\cup i]}}_{[A\cup i]}([A\cup i])+\sum_{\substack{A\subseteq T\setminus i \\ A\neq \emptyset}} (-1)^{t-a}I^{v^{(-T)\cup [A]}_{[A]}}([A])+(-1)^{t-1}I^{v^{(-T)\cup [i]}_{[i]}}([i]) \\
&=\sum_{\substack{A\subseteq T \\ A\neq \emptyset}} (-1)^{t-a} I^{v^{(-T)\cup [A]}_{[A]}}([A])
\end{align*}
\end{proof}

\begin{theorem}\label{THEO INT Sha}
Under axioms \textbf{(L)}, \textbf{(N)}, \textbf{(I)}, \textbf{(S)}, \textbf{(E)} and \textbf{(R)}, $\forall v\in\mathcal{G} (L), \forall T \subseteq N\setminus \varnothing,$
$$I^v(T) = I^v_{\mathrm{s}}(T) := \sum_{x_{-T}\in L_{-T}} \frac{(n-s(x_{-T})-t)!k(x_{-T})!}{(n-s(x_{-T})+k(x_{-T})-t+1)!} \sum_{A\subseteq T}(-1)^{t-a}v(0_{T\setminus A}, k_A, x_{-T}).$$
\end{theorem}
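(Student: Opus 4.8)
The strategy is to combine the recursive reduction of Lemma~\ref{Lemma R} with the efficiency axiom \textbf{(E)} applied to the reduced games, exactly as in the classical case of \citep{grro99}. By Lemma~\ref{Lemma R}, it suffices to determine the quantities $I^{v_{[A]}^{(-T)\cup[A]}}([A])$, i.e.\ the interaction index of a singleton $[A]$ (a merged block) in a game on the index set $(N\setminus T)\cup[A]$. Since a singleton interaction index should coincide with the importance index, the plan is first to show that, under \textbf{(L)}, \textbf{(N)}, \textbf{(I)}, \textbf{(S)}, \textbf{(E)} and \textbf{(R)}, the restriction of $I$ to singletons satisfies the axioms of Theorem~\ref{THEO Imp LNISE}, hence is given by the importance-index formula~(\ref{Importance}). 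The only delicate point there is that the importance-index efficiency axiom sums over all of $N$, whereas after merging we have a game on $(N\setminus T)\cup[A]$; but \textbf{(E)} for $I$ is precisely the statement that $\sum_{i\in M} I^w(i)$ equals the full telescoping sum $\sum_{x<k}( w(x+1_M)-w(x))$ for any underlying player set $M$, so this transfers directly.

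Next I would plug the closed form for $I^{v_{[A]}^{(-T)\cup[A]}}([A])$ into the formula of Lemma~\ref{Lemma R}. Writing $M=(N\setminus T)\cup[A]$, which has $|M| = n-t+1$ elements, and noting that in the merged game the block $[A]$ plays the role of a single player while the coordinates $x_{-T}\in L_{-T}$ parametrize the remaining players, formula~(\ref{Importance}) gives
$$I^{v_{[A]}^{(-T)\cup[A]}}([A]) = \sum_{x_{-T}\in L_{-T}} \frac{(n-t-s(x_{-T}))!\,k(x_{-T})!}{(n-t-s(x_{-T})+k(x_{-T})+1)!}\big(v(x_{-T},k_A,0_{T\setminus A}) - v(x_{-T},0_T)\big),$$
since $v_{[A]}^{(-T)\cup[A]}(x_{-T},k_{[A]}) = v(x_{-T},k_A,0_{T\setminus A})$ and $v_{[A]}^{(-T)\cup[A]}(x_{-T},0_{[A]}) = v(x_{-T},0_T)$, and here $s(x_{-T}),k(x_{-T})$ count the positive (resp.\ maximal) coordinates among the $N\setminus T$ players only. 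Substituting into~(\ref{Formule R}) yields
$$I^v(T) = \sum_{x_{-T}\in L_{-T}} \frac{(n-t-s(x_{-T}))!\,k(x_{-T})!}{(n-t-s(x_{-T})+k(x_{-T})+1)!}\sum_{\substack{A\subseteq T\\ A\neq\emptyset}}(-1)^{t-a}\big(v(x_{-T},k_A,0_{T\setminus A}) - v(x_{-T},0_T)\big).$$

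Finally I would simplify the inner alternating sum. The term $-v(x_{-T},0_T)$ contributes $-v(x_{-T},0_T)\sum_{A\subseteq T, A\neq\emptyset}(-1)^{t-a} = -v(x_{-T},0_T)\big((-1)^t(1-1)^t/(-1)^t \text{ minus the } A=\emptyset \text{ term}\big)$, which for $t\geq 1$ equals $-v(x_{-T},0_T)\cdot(-(-1)^{t}(-1)^{t}) $; more cleanly, $\sum_{A\subseteq T}(-1)^{t-a} = 0$ for $t\geq 1$, so removing $A=\emptyset$ leaves $\sum_{A\neq\emptyset}(-1)^{t-a} = -(-1)^t$, and together with the sign on $v(0_{T\setminus\emptyset},\ldots)=v(x_{-T},0_T)$ the two boundary contributions recombine so that the restriction $A\neq\emptyset$ can be dropped and the sum rewritten as $\sum_{A\subseteq T}(-1)^{t-a}v(0_{T\setminus A},k_A,x_{-T})$. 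This gives exactly $I^v_{\mathrm{s}}(T)$ as claimed. The main obstacle I anticipate is bookkeeping: verifying that the combinatorial coefficient emerging from Theorem~\ref{THEO Imp LNISE} applied on the $(n-t+1)$-element player set $M$ matches $\frac{(n-s(x_{-T})-t)!k(x_{-T})!}{(n-s(x_{-T})+k(x_{-T})-t+1)!}$ after correctly identifying which coordinates of the reduced game count toward $s$ and $k$, and checking the cancellation of the $v(x_{-T},0_T)$ terms handles the $t\geq 1$ edge cases without sign errors. One should also confirm that the singleton index indeed inherits axiom \textbf{(I)} and \textbf{(S)} in the reduced game, which follows since merging players and restricting commute with the hypotheses of those axioms.
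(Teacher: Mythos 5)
Your proposal is correct and follows essentially the same route as the paper's own proof: it invokes Lemma~\ref{Lemma R} to reduce $I^v(T)$ to singleton indices of the reduced games $v_{[A]}^{(-T)\cup[A]}$, identifies those singleton indices with the importance-index formula~(\ref{Importance}) applied on the $(n-t+1)$-element player set (giving the coefficient $\frac{(n-t-s(x_{-T}))!\,k(x_{-T})!}{(n-t+1+k(x_{-T})-s(x_{-T}))!}$), and then recombines the $-v(x_{-T},0_T)$ terms with the missing $A=\emptyset$ term via $\sum_{A\subseteq T}(-1)^{t-a}=0$. The only difference is that you make explicit the transfer of the axioms (in particular \textbf{(E)}) to the reduced player set, a point the paper leaves implicit.
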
 

\medskip

\begin{proof} [\bf Proof ]
Let $v\in\mathcal{G} (L)$, and $T\in N \setminus \{\emptyset\}$. By axioms \textbf{(L)}, \textbf{(N)}, \textbf{(I)}, \textbf{(S)} and \textbf{(E)}, we have
\begin{align*} 
I^{v^{(-T)\cup [A]}_{[A]}}([A])
&=\sum_{x_{T}\in L_{-T}} b_{n(x_{-T})}\big(v^{(-T)\cup [A]}_{[A]}(x_{-T}, k_{[A]})-v^{(-T)\cup [A]}_{[A]}(x_{-T}, 0_{[A]})\big), 
\end{align*}
with $b_{n(x_{-T})} =\displaystyle\frac{\big(n-t-s(x_{-T})\big)!k(x_{-T})!}{\big(n-t+1+k(x_{-T})-s(x_{-T})\big)!}$. 

\bigskip

By Lemma (\ref{Lemma R}), we have
\begin{align*}
I^v(T)
&=\sum_{\substack{A\subseteq T \\ A\neq \emptyset}} (-1)^{t-a} I^{v^{(-T)\cup [A]}_{[A]}}([A])\\
&=\sum_{\substack{A\subseteq T \\ A\neq \emptyset}} (-1)^{t-a} \sum_{x_{-T}\in L_{-T}} b_{n(x_{-T})} \big(v(x_{-T}, k_A, 0_{T \setminus A})-v(x_{-T}, 0_T)\big)\\
&=\sum_{x_{-T}\in L_{-T}} b_{n(x_{-T})} \sum_{\substack{A\subseteq T \\ A\neq \emptyset}} (-1)^{t-a} \big(v(x_{-T}, k_A, 0_{T \setminus A})-v(x_{-T}, 0_T)\big)\\
&=\sum_{x_T\in L_{-T}} b_{n(x_{-T})} \sum_{A\subseteq T} (-1)^{t-a} v(k_A, 0_{T \setminus A}, x_{-T}).
\end{align*}
\end{proof}

\section{Interaction indices for the Choquet integral}\label{sec:Cho}
We propose in this section an interpretation of the interaction in continuous spaces, that is, after extending $v$ to the continuous domain $[0, k]^N$. The most usual extension of $v$ on $[0,k]^N$ is the Choquet integral with respect to $k$-ary capacities \citep{grla03b}. 

\medskip

Let $z\in[0, k]^N$, and $q\in L$ such that $q=\lfloor z \rfloor$ (the floor integer part of $z$). The Choquet integral w.r.t. a $k$-ary capacity $v$ at point $z$ is defined by
$$\mathcal{C}_{v}(z) = v(q) + \mathit{C}_{\mu_q} (z-q),$$
where $\mu_q$ is a capacity given by 
$$\mu_q(A)=v((q+1)_A, q_{-A}) - v(q), \forall A \subset N.$$
\begin{proposition}
\label{PROP InteShaSum}
For every $v\in\mathcal{G} (L)$,
$$I^v_{\mathrm{s}}(T) = \sum_{x\in \{0,\ldots, k-1\}^N} I^{\mu_x}_{Sh}(T), \forall T \subseteq N\setminus \{\emptyset\}.$$
\end{proposition}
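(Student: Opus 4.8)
The plan is to expand both sides using the closed-form expressions already at our disposal and match them term by term. On the left we have, by Theorem \ref{THEO INT Sha},
\[
I^v_{\mathrm{s}}(T) = \sum_{x_{-T}\in L_{-T}} \frac{(n-s(x_{-T})-t)!\,k(x_{-T})!}{(n-s(x_{-T})+k(x_{-T})-t+1)!} \sum_{A\subseteq T}(-1)^{t-a}v(0_{T\setminus A}, k_A, x_{-T}).
\]
On the right, each $\mu_x$ (for $x\in\{0,\ldots,k-1\}^N$) is the capacity on $2^N$ defined by $\mu_x(B)=v((x+1)_B,x_{-B})-v(x)$, and the Grabisch interaction index gives
\[
I^{\mu_x}_{Sh}(T) = \sum_{R\subseteq N\setminus T} \frac{(n-r-t)!\,r!}{(n-t+1)!}\sum_{B\subseteq T}(-1)^{t-b}\mu_x(B\cup R).
\]
The first step is to substitute the definition of $\mu_x$ into the inner alternating sum over $B$: the term $-v(x)$ is annihilated because $\sum_{B\subseteq T}(-1)^{t-b}=0$ (as $t\geq 1$), so $\sum_{B\subseteq T}(-1)^{t-b}\mu_x(B\cup R) = \sum_{B\subseteq T}(-1)^{t-b}v((x+1)_{B\cup R},x_{-(B\cup R)})$.

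The second step is to sum over all $x\in\{0,\ldots,k-1\}^N$ and all $R\subseteq N\setminus T$, and to reorganize the resulting sum according to the value it produces at points of $L$. Fix $R\subseteq N\setminus T$ and a coordinate profile; the key observation is that for $j\in N\setminus(T\cup R)$ the argument is $x_j\in\{0,\ldots,k-1\}$, for $j\in R$ it is $x_j+1\in\{1,\ldots,k\}$, and for $j\in T$ it is either $x_j$ (if $j\notin B$) or $x_j+1$ (if $j\in B$). I would introduce, for each $x_{-T}\in L_{-T}$, the decomposition of $N\setminus T$ into $R=\{j: (x_{-T})_j\text{ came from an }x_j+1\}$ versus its complement; more precisely, collecting all $(x,R)$ pairs that yield a fixed outer profile $y_{-T}\in L_{-T}$ and fixed subset structure on $T$, one sees that the coordinates of $y_{-T}$ equal to $0$ can only arise with $j\notin R$, the coordinates equal to $k$ can only arise with $j\in R$, and each intermediate coordinate $y_j\in\{1,\ldots,k-1\}$ arises exactly once with $j\notin R$ (taking $x_j=y_j$) and once with $j\in R$ (taking $x_j=y_j-1$). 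The same dichotomy, but now without the $0$-versus-$k$ asymmetry because the inner alternating sum over $B\subseteq T$ already symmetrizes $T$, handles the $T$-coordinates: for a fixed $A\subseteq T$, the value $v(0_{T\setminus A},k_A,y_{-T})$ is produced precisely when, in the sum over $x$ and $B$, the $T$-block is driven to $0$ on $T\setminus A$ and to $k$ on $A$.

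The third step is the combinatorial heart: after the regrouping, the coefficient multiplying $(-1)^{t-a}v(0_{T\setminus A},k_A,y_{-T})$ on the right-hand side is a sum, over all subsets $R\subseteq N\setminus T$ compatible with $y_{-T}$ (i.e.\ $R$ must contain all indices where $y_j=k$ and avoid all where $y_j=0$, and may freely include or exclude each of the $m:=|\{j\in N\setminus T: 0<y_j<k\}|$ intermediate indices), of $\frac{(n-r-t)!\,r!}{(n-t+1)!}$, where $r=k(y_{-T})+(\text{number of chosen intermediate indices})$ and $k(y_{-T})=|\{j: y_j=k\}|$. Writing $s(y_{-T})$ for the number of nonzero coordinates of $y_{-T}$, the free part has size $m=s(y_{-T})-k(y_{-T})$, so the coefficient is $\sum_{p=0}^{m}\binom{m}{p}\frac{(n-t-k(y_{-T})-p)!\,(k(y_{-T})+p)!}{(n-t+1)!}$. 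The remaining task is the identity
\[
\sum_{p=0}^{m}\binom{m}{p}\frac{(n-t-k(y_{-T})-p)!\,(k(y_{-T})+p)!}{(n-t+1)!} = \frac{(n-s(y_{-T})-t)!\,k(y_{-T})!}{(n-s(y_{-T})+k(y_{-T})-t+1)!},
\]
which, after cancelling $(n-t+1)!$ and reindexing, is a standard beta-function (or Vandermonde-type) summation: it is the discrete analogue of $\int_0^1 u^{k(y_{-T})}(1-u)^{\,n-t-s(y_{-T})}\,du$ expanded via $1 = (u+(1-u))^m$. I expect this last algebraic identity to be the main obstacle — not conceptually deep, but the one place where care with factorials is needed; everything before it is bookkeeping that the structure of $\mu_x$ makes essentially forced. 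Once the identity is in hand, the coefficients on both sides agree term by term for every $A\subseteq T$ and every $y_{-T}\in L_{-T}$, which proves the claim.
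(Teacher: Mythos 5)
Your proposal is correct and follows essentially the same route as the paper: substitute $\mu_x$ into the Shapley interaction formula (noting $\Delta_T\mu_x(S)=\Delta_T v(x+1_S)$), regroup the pairs $(x,R)$ according to the point of $L$ they produce, with $R$ ranging over the interval between the kernel and the support of the $N\setminus T$ profile, and evaluate the resulting factorial sum by the same beta-integral identity, which is precisely the paper's auxiliary lemma. The only cosmetic difference is that the paper stops at the grouping $\sum_{z\,:\,z_T<k_T} b_{n(z_{-T})}\,\Delta_T v(z)$ instead of telescoping the $T$-block down to the corner evaluations $v(0_{T\setminus A},k_A,y_{-T})$ as you do, so the cancellation of intermediate $T$-profiles that you invoke implicitly is a routine telescoping check rather than a gap.
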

To prove this result, the following combinatorial result is useful.
\begin{lemma}
$$\sum_{S\subseteq [A, B]} \frac{(n-s-1)!s!}{n!} = \frac{(n-b-1)!a!}{(n-b+a)!}, \forall A, B\subseteq N,A\subseteq B$$
\end{lemma}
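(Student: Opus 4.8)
The plan is to reduce the sum to a single Beta-function integral and collapse it with the binomial theorem. Throughout write $a=|A|$, $b=|B|$, $s=|S|$, and note that since the largest term has $s=b$ the factorial $(n-s-1)!$ is defined only when $b\le n-1$; I assume this (equivalently, that the integral below converges), which holds in the intended application where a distinguished element lies outside $B$. First I would reparametrize the interval: every $S\in[A,B]$ has the form $S=A\cup S'$ with $S'\subseteq B\setminus A$, and $|B\setminus A|=b-a=:m$. Grouping by $j=|S'|$ there are $\binom{m}{j}$ such sets, each with $s=a+j$, so
$$\sum_{S\subseteq[A,B]}\frac{(n-s-1)!\,s!}{n!}=\sum_{j=0}^{m}\binom{m}{j}\frac{(n-a-j-1)!\,(a+j)!}{n!}.$$

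Next I would invoke the Beta-integral identity $\frac{(n-s-1)!\,s!}{n!}=\int_0^1 t^{s}(1-t)^{n-s-1}\,dt$, valid for $0\le s\le n-1$, and exchange the finite sum with the integral to obtain
$$\sum_{j=0}^{m}\binom{m}{j}\frac{(n-a-j-1)!\,(a+j)!}{n!}=\int_0^1\sum_{j=0}^{m}\binom{m}{j}\,t^{a+j}(1-t)^{n-a-j-1}\,dt.$$
The decisive algebraic step is to factor the integrand as $t^{a}(1-t)^{n-a-m-1}\sum_{j=0}^{m}\binom{m}{j}t^{j}(1-t)^{m-j}$, whereupon the binomial theorem collapses the inner sum to $(t+(1-t))^{m}=1$. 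This leaves $\int_0^1 t^{a}(1-t)^{n-a-m-1}\,dt=\frac{a!\,(n-a-m-1)!}{(n-m)!}$, and substituting $m=b-a$ (so that $n-a-m-1=n-b-1$ and $n-m=n-b+a$) gives exactly the right-hand side.

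There is no serious obstacle; the only points requiring care are the bookkeeping of exponents when factoring the integrand (the powers $m-j$ and $n-a-m-1$ must both be nonnegative, which is guaranteed precisely by $b\le n-1$) and the legitimacy of the factorial domain, which should be stated explicitly. As an elementary alternative avoiding integrals, I would induct on $m$: splitting $[A,B]$ according to whether $S$ contains a fixed $x\in B\setminus A$ pairs the term of each $S''\in[A,B\setminus x]$ with that of $S''\cup x$, producing the recurrence $\gamma(s,n)+\gamma(s+1,n)=\gamma(s,n-1)$ for $\gamma(s,n):=\frac{(n-s-1)!\,s!}{n!}$. This reduces the sum over $[A,B]$ (with parameter $n$) to the sum over $[A,B\setminus x]$ with $n$ lowered by one, and the induction hypothesis applied with data $(n-1,a,b-1)$ returns $\frac{(n-b-1)!\,a!}{(n-b+a)!}$, closing the argument against the right-hand side.
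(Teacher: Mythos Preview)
Your proof is correct and follows essentially the same route as the paper's: reparametrize by $S'=S\setminus A$, rewrite each summand as a Beta integral, and collapse the resulting binomial sum to $1$; the only cosmetic difference is that the paper uses the form $\int_0^1 x^{n-s-1}(1-x)^{s}\,dx$, i.e.\ your substitution with $t\leftrightarrow 1-t$. Your explicit remark on the domain requirement $b\le n-1$ and the inductive alternative are welcome additions that the paper does not include.
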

\begin{proof} [\bf Proof ]
Let $A, B\subseteq N$, such that $A\subseteq B,$
\begin{align*}
\sum_{S\subseteq [A, B]} \frac{(n-s-1)!s!}{n!} 
&= \sum_{S\subseteq [\varnothing, B\setminus A]} \frac{(n-s-a-1)!(s+a)!}{n!}\\
&= \sum_{s=0}^{b-a} \binom {b-a} s \frac{(n-s-a-1)!(s+a)!}{n!}\\
%&= \sum_{s=0}^{b-a} \binom {b-a} s B(n-s-a, s+a+1)\\
&= \sum_{s=0}^{b-a} \binom {b-a} s \int_0^1 x^{n-s-a-1} (1-x)^{s+a}dx\\
&= \int_0^1 x^{n-b-1} (1-x)^a \sum_{s=0}^{b-a} \binom {b-a} s  x^{b-a-s} (1-x)^s dx\\
&= \int_0^1 x^{n-b-1} (1-x)^adx\\
%&= B(n-b, a+1)\\
&= \frac{(n-b-1)!a!}{(n-b+a)!}
\end{align*}
\end{proof}
We now prove Proposition \ref{PROP InteShaSum}.
\begin{proof}[\bf Proof ]
Let $T \subseteq N \setminus \{\emptyset\}$. 
\begin{align*}  
\sum_{x\in \{0,\ldots, k-1\}^N} I^{\mu_x}_{Sh}(T)
&= \sum_{x\in \{0,\ldots, k-1\}^N} \sum_{S\subseteq N\setminus T} \frac{(n-s-t)!s!}{(n-t+1)!}\Delta_T \mu_x (S)\\
&= \sum_{x\in \{0,\ldots, k-1\}^N} \sum_{S\subseteq N\setminus T} \frac{(n-s-t)!s!}{(n-t+1)!}\Delta_T v(x+1_S)\\
&= \sum_{\substack{z\in L \\ z_T<k_T}} \Delta_T v(z) \sum_{\substack{S\subseteq N\setminus T \\ \forall j\in S, z_j>0 \\ \forall j\in N\setminus S, z_j<k}} \frac{(n-s-t)!s!}{(n-t+1)!}\\
&= \sum_{\substack{z\in L \\ z_T<k_T}} \Delta_T v(z) \sum_{\substack{S\subseteq N\setminus T \cap S(z_{-T}) \\ S\supseteq K(z_{-T})}} \frac{(n-s-t)!s!}{(n-t+1)!}\\
&= \sum_{\substack{z\in L \\ z_T<k_T}} \Delta_T v(z) \sum_{\substack{S\subseteq S(z_{-T}) \\ S\supseteq K(z_{-T})}} \frac{(n-s-t)!s!}{(n-t+1)!}\\
&= \sum_{\substack{z\in L \\ z_T<k_T}} \frac{(n-s(z_{-T})-t)!k(z_{-T})!}{(n-s(z_{-T})+k(z_{-T})-t+1)!} \Delta_T v(z).
\end{align*}
\end{proof}
The interaction index on continuous domain $[0, k]^N$ takes the form of the total over the domain $\{0,\ldots, k-1\}^N$ of the classical interaction index, it means that for each elementary cell in the grid L, the interaction index corresponds to the usual interaction index.

\medskip

\begin{theorem}
Let $v$ a $k$-ary capacity.
$$I^v_{\mathrm{s}} (T) = \int_{[0, k]^n} \frac{\partial^{|T|} \mathcal{C}_{v}}{\partial z_T}(z) \: dz\\, \forall T \subseteq N.$$ 
\end{theorem}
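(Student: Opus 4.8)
The plan is to combine the two results that have just been established: Proposition~\ref{PROP InteShaSum}, which expresses $I^v_{\mathrm{s}}(T)$ as a sum over the cells $x\in\{0,\ldots,k-1\}^N$ of the classical Shapley interaction indices $I^{\mu_x}_{Sh}(T)$, and the known formula of \cite{grla10} expressing the classical interaction index of a capacity as the integral of the corresponding mixed partial derivative of its Choquet integral over the unit cube. Concretely, for the capacity $\mu_x$ on $2^N$ one has
\[
I^{\mu_x}_{Sh}(T)=\int_{[0,1]^N}\frac{\partial^{|T|}\mathcal{C}_{\mu_x}}{\partial z_T}(u)\,du .
\]
So the first step is to recall (or cite) this identity, and the second step is to show that the integral over the big cube $[0,k]^N$ decomposes as a sum of integrals over the elementary cells, each of which reduces to the corresponding integral for $\mu_x$.

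First I would partition $[0,k]^N=\bigcup_{x\in\{0,\ldots,k-1\}^N}\big(x+[0,1]^N\big)$ (the overlaps on the faces have measure zero and do not affect the integral), so that
\[
\int_{[0,k]^n}\frac{\partial^{|T|}\mathcal{C}_v}{\partial z_T}(z)\,dz
=\sum_{x\in\{0,\ldots,k-1\}^N}\int_{x+[0,1]^N}\frac{\partial^{|T|}\mathcal{C}_v}{\partial z_T}(z)\,dz .
\]
Next, on the interior of the cell $x+[0,1]^N$ we have $\lfloor z\rfloor=x$, so by the defining formula $\mathcal{C}_v(z)=v(x)+C_{\mu_x}(z-x)$; the constant term $v(x)$ drops out under differentiation, and after the substitution $u=z-x$ the integral over that cell equals exactly $\int_{[0,1]^N}\frac{\partial^{|T|}C_{\mu_x}}{\partial u_T}(u)\,du$, which by the classical result is $I^{\mu_x}_{Sh}(T)$. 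Summing over all cells and invoking Proposition~\ref{PROP InteShaSum} gives the claim.

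The main obstacle is the regularity issue at the cell boundaries: the Choquet integral $\mathcal{C}_v$ is only piecewise multilinear, hence its mixed partial derivative $\frac{\partial^{|T|}\mathcal{C}_v}{\partial z_T}$ is only defined almost everywhere and has jumps across the hyperplanes $\{z_i\in\mathbb{Z}\}$. I would need to make sure the statement is interpreted in the sense of an a.e.-defined integrand (which is legitimate since the set of non-differentiability has Lebesgue measure zero in $[0,k]^n$), and that within each open cell $\mathcal{C}_v$ coincides with a genuine multilinear polynomial in $z-\lfloor z\rfloor$ so that the classical capacity-case result applies verbatim. A secondary, purely bookkeeping point is to check that the coefficient $\frac{(n-s(z_{-T})-t)!k(z_{-T})!}{(n-s(z_{-T})+k(z_{-T})-t+1)!}$ appearing in $I^v_{\mathrm{s}}$ is exactly what the cell-wise summation produces — but this is already done inside the proof of Proposition~\ref{PROP InteShaSum}, so no new computation is needed here.
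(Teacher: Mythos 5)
Your proposal is correct and follows essentially the same route as the paper's proof: both rest on Proposition~\ref{PROP InteShaSum}, the classical identity expressing $I^{\mu_x}_{Sh}(T)$ as the integral of $\partial^{|T|}\mathcal{C}_{\mu_x}/\partial z_T$ over the unit cube, and the cell-wise relation $\mathcal{C}_v(z)=v(x)+\mathcal{C}_{\mu_x}(z-x)$ on $[x,x+1_N]$, the only difference being that you run the chain of equalities from the integral toward $I^v_{\mathrm{s}}(T)$ while the paper runs it the other way. Your explicit remark that the mixed derivative is only defined almost everywhere (off the hyperplanes $z_i\in\mathbb{Z}$) is a small precision the paper leaves implicit, but it changes nothing in substance.
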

\begin{proof} [\bf Proof ]
Let $v$ a $k$-ary capacity. For every $T \subseteq N$. $\forall x\in \{0,\ldots, k-1\}^N$, we have,
\begin{align*}  
I^v_{\mathrm{s}} (T)
&=\sum_{x\in \{0,\ldots, k-1\}^N} I^{\mu_x}_{Sh}(T)\\
&= \sum_{x\in \{0,\ldots, k-1\}^N} \int_{[0, 1]^n} \frac{\partial^{t} \mathcal{C}_{\mu_x}}{\partial z_T}(z) \: dz\\
&= \sum_{x\in \{0,\ldots, k-1\}^N} \int_{[x, x+1_N]} \frac{\partial^{t} \mathcal{C}_{\mu_x}}{\partial z_T}(z-x) \: dz\\
&= \sum_{x\in \{0,\ldots, k-1\}^N} \int_{[x, x+1_N]} \frac{\partial^{t} \mathcal{C}_{v}}{\partial z_T}(z) \: dz\\
&=\int_{[0, k]^n} \frac{\partial^{t} \mathcal{C}_{v}}{\partial z_T}(z) \: dz. 
\end{align*}
\end{proof}
The interaction index on continous domain appears as the mean of relative amplitude of the range of $\mathcal{C}_{v}$ w.r.t. $T$, when the remaining variables take uniformly random values. The partial derivative is the local interaction of $\mathcal{C}_{v}$ at point $z$.

\bibliographystyle{plainnat}
\bibliography{References}

\end{document}